
\documentclass[12pt,letterpaper]{amsproc}

\usepackage[margin=1in]{geometry}
\usepackage{amsmath,amssymb,amsthm}
\usepackage{mathrsfs}
\usepackage{graphicx}
\usepackage{color}
\usepackage{slashbox} 

\newtheorem{theorem}{Theorem}
\newtheorem{corollary}{Corollary}
\newtheorem{lemma}{Lemma}

\theoremstyle{definition}
\newtheorem{assumption}{Assumption}
\newtheorem{remark}{Remark}
\newtheorem{example}{Example}

\newcommand{\Ref}[1]{(\ref{#1})}

\newcommand{\diagbox}{\backslashbox} 

\DeclareMathOperator*{\argmin}{arg\,min}
\DeclareMathOperator*{\Var}{Var}

\def\R{\mathbb{R}}
\def\1{\mathbf{1}}

\def\err{e}
\def\ve{\varepsilon}
\def\hve{\hat{\ve}}
\def\hW{\hat{W}}
\def\tW{\tilde{W}}
\def\shat{\hat{\sigma}}
\def\sj{\sum_{j=1}^n}
\def\sk{\sum_{k=1}^n}
\def\sm{\sum_{m=1}^n}
\def\avj{\frac{1}{n}\sj}
\def\avk{\frac{1}{n}\sk}
\def\avm{\frac{1}{n}\sm}

\def\opn{o_p(n^{-1/2})}
\def\Opn{O_p(n^{-1/2})}
\def\op{o_p(1)}

\def\Dset{\mathbb{D}}

\def\ahat{\hat{a}}
\def\rhat{\hat r}
\def\Fclass{\mathfrak{F}}
\def\fmem{\mathfrak{f}}

\theoremstyle{definition} 

\numberwithin{equation}{section}


\allowdisplaybreaks[1]

\begin{document} 

\title[Testing for heteroskedasticity]{Detecting heteroskedasticity in
  nonparametric regression using weighted empirical processes}

\author[J.\ Chown and U.U.\ M\"uller]{Justin Chown$^1$ and Ursula
  U. M\"uller$^2$}

\thanks{Corresponding author: Justin Chown
  (justin.chown@ruhr-uni-bochum.de) \\ 
$^1${\em Ruhr-Universit\"at Bochum, Fakult\"at f\"ur Mathematik,
  Lehrstuhl f\"ur Stochastik, 44780 Bochum, Germany.} \\
$^2${\em Department of Statistics, Texas A\&M University, College
  Station, TX 77843-3143, USA}}

\begin{abstract}
Heteroskedastic errors can lead to inaccurate statistical conclusions
if they are not properly handled. We introduce a test for
heteroskedasticity for the nonparametric regression model with
multiple covariates. It is based on a suitable residual-based
empirical distribution function. The residuals are constructed using
local polynomial smoothing. Our test statistic involves a ``detection
function" that can verify heteroskedasticity by exploiting just the
independence-dependence structure between the detection function and
model errors, i.e.\ we do not require a specific model of the variance
function. The procedure is asymptotically distribution free:
inferences made from it do not depend on unknown parameters. It is
consistent at the parametric (root-$n$) rate of convergence. Our results
are extended to the case of missing responses and illustrated with
simulations.
\end{abstract}

\maketitle

\noindent {\em Keywords:}
heteroskedastic nonparametric regression, 
local polynomial smo\-other, 
missing at random,
transfer principle,
weighted empirical process
\bigskip

\noindent{\itshape 2010 AMS Subject Classifications:} 
Primary: 62G08, 62G10; 
Secondary: 62G20, 62G30.


\section{Introduction} \label{intro}
When analysing data, it is common practice to first explore 
available options using various plotting techniques. For
regression models, a key tool is to construct a plot of the model
residuals in absolute value against fitted values. If there is only
one covariate, we can use a plot of the residuals in absolute value 
against that covariate. This technique helps determine if
theoretical requirements for certain statistical procedures are
satisfied, in particular whether or not the variation in the errors
remain constant across values of the covariate. This is an important
assumption that we want to examine more closely. We will therefore
consider the model with constant error variance $\sigma_0^2$, the 
{\em homoskedastic model}, 
which we can write in the form
\begin{equation*}
Y = r(X) + \ve, \quad \ve = \sigma_0\err.
\end{equation*}
The function $r$ is the regression function and $\sigma_0$ a positive
constant. We consider a response variable $Y$, a covariate {\em vector}
$X$ and assume that $X$ and the random variable $\err$ are independent,
where $\err$ has mean equal to zero and variance equal to one. 
 
When the variation in the data is not constant across the covariate
values, the {\em heteroskedastic model} is adequate:
\begin{equation} \label{hetero}
Y = r(X) + \ve, \quad \ve = \sigma(X)\err.
\end{equation}
Here $\sigma(\cdot)$ is a scale function with $E[\sigma^2(X)] =
\sigma_0^2$. Model (\ref{hetero}) contains the homoskedastic regression
model as a (degenerate) special case with $\sigma \equiv \sigma_0$, a
constant function. In order to discriminate between both
models we assume that $\sigma(\cdot)$ is non-constant in the
heteroskedastic case, i.e.\ it varies with the values of the
covariates $X$.

Testing for heteroskedasticity is of great importance: 
many procedures lead to inconsistent and inaccurate results if the
heteroskedastic model is appropriate but not properly handled. 
Consider model \Ref{hetero} with a parametric regression
function, e.g.\ linear regression with $r(X) = \vartheta^\top X$. The
ordinary least squares estimator $\hat \vartheta$ of the parameter
vector $\vartheta$, which is constructed for the homoskedastic model,
will still be consistent under heteroskedasticity. However it will be
less accurate than an estimator that puts more weight on observations
$(X,Y)$ with small variance $\sigma^2(X)$ (and less weight when the
variance is large). The estimated variance of $\hat \vartheta$ will 
be biased if the model is in fact heteroskedastic, so testing
hypotheses based on $\hat \vartheta$ may lead to invalid conclusions.

The relationship between the homoskedastic and heteroskedastic
models can be expressed in terms of statistical hypotheses:
%
$$
H_0:\, \exists~ \sigma_0 > 0,~ \sigma(\cdot) = \sigma_0
  \quad a.e.\ (G) \qquad \mbox{vs.} \qquad
H_a:\, \sigma(\cdot) \in \Sigma.
$$
Here $G$ is the distribution function of the covariates $X$ and
$\Sigma = \{\sigma \in L_2(G): \sigma(\cdot) > 0 \text{ and
  non-constant } a.e.(G) \}$ is a space of scale functions. The null
hypothesis corresponds to the homoskedastic model and the alternative
hypothesis to the heteroskedastic model. 

The tests introduced in this article are inspired by 
Stute, Xu and Zhu (2008), who propose tests for 
a parametric regression model with high-dimensional covariates against 
nonparametric alternatives, and by Koul, M\"uller and Schick (2012), who 
develop tests for linearity of a semiparametric
regression function for fully observed data and for a missing data
model. These approaches are in the spirit of Stute 
(1997), who introduces marked empirical processes to test parametric
models for the regression function in nonparametric regression with 
univariate covariates.

Our test statistics are
based on weighted empirical distribution functions of residuals. 
The form of these
statistics is strikingly simple and their associated limiting
behaviour is obtained by considering the related weighted empirical
process. We will show that our test
statistic converges with root-$n$ rate to a Brownian bridge. Hence
it is asymptotically distribution free and quantiles are available.
We consider detecting heteroskedasticity (represented by the
non-constant scale function $\sigma(\cdot)$) by using some
(non-constant) ``detection function'' $\omega(\cdot)$ in the space
$\Sigma$. To explain the idea, we consider the weighted error
distribution function 
\begin{equation*}
E\big[\omega(X)\1[\ve \leq t]\big] =
E\big[\omega(X)\1[\sigma(X)\err \leq t]\big],
 \qquad t \in \R.
\end{equation*}
If the null hypothesis is true $\ve = \sigma_0 e$ and we can write
\begin{equation*} 
E\big[\omega(X)\1[\ve \leq t]\big] =
E\big[\omega(X)\1[\sigma_0\err \leq t]\big]
 = E\big[
 E\big[\omega(X)\big]\1[\sigma_0\err \leq t]\big]
 = E\big[
 E\big[\omega(X)\big]\1[\ve \leq t]\big],
\end{equation*}
$t \in \R$.
Here we have used that under the null hypothesis the
covariates $X$ and the errors $\ve = \sigma_0\err$ 
are independent. This motivates a test based on the difference 
between the two quantities, i.e.\ on
\begin{equation}
\label{diffomega}
E\big[\big\{\omega(X) - E\big[\omega(X)\big]\big\}
 \1[\ve \leq t]\big],
 \qquad t \in \R,
\end{equation}
which is zero under $H_0$, but typically not under $H_a$
(see Remark \ref{remTnpow1} in Section \ref{fullmodel} for further details).
We can estimate the outer expectation by its empirical version, which
yields a test based on
\begin{equation*}
U_n(t) = n^{-1/2} \sj \Big\{\omega(X_j) - E\big[\omega(X_j)\big]\Big\}
 \1[\ve_j \leq t],
 \qquad t \in \R.
\end{equation*}
This is a process in the Skorohod space $D(-\infty,\infty)$. To move
this process to the more convenient space $D[-\infty,\infty]$, we
define the familiar limit $U_n(-\infty) = 0$ and the limit
\begin{equation*}
U_n(\infty) = n^{-1/2} \sj 
 \Big\{\omega(X_j) - E\big[\omega(X_j)\big]\Big\}.
\end{equation*}
Since the variance of $U_n(\infty)$ equals the variance of 
$\omega(X)$ it
is clear the asymptotic distribution of $\sup_{t \in \R}|U_n(t)|$ will
depend on $\Var\{\omega(X)\}$, which is not desirable for obtaining a
standard distribution useful for statistical inference. We therefore
standardise $U_n(t)$ and obtain the weighted empirical process 
\begin{equation*}
S_n(t) = n^{-1/2} \sj
 W_j \1[\ve_j  \leq t],
 \qquad t \in \R,
\end{equation*}
with weights
\begin{equation} \label{Wj}
W_j 
= \frac{\omega(X_j) - E\big[\omega(X_j)\big]}
{\Var[\omega(X_j)]^{1/2}}
= \frac{\omega(X_j) - E\big[\omega(X_j)\big]}
  {E\big[\big\{\omega(X_j) - E\big[\omega(X_j)\big]\big\}^2\big]^{1/2}
  },\qquad j=1,\ldots,n.
\end{equation}
The weights $W_j$ are centred to guarantee
that the tests are asymptotically distribution free. Related research on
(unweighted) residual-based empirical distribution functions typically 
provides uniform expansions involving a non-negligible stochastic drift 
parameter that includes the error density as a parameter of the underlying
distribution, i.e.\ the statistics are not distribution free 
(see e.g.\ Akritas and Van Keilegom, 2001; M\"uller, Schick and 
Wefelmeyer, 2007, 2009). 
This is in contrast to our case where the mean zero weights 
ensure that no drift emerges (see the discussion before 
Lemma \ref{lemTnModulus} in Section \ref{aux} for more details).

The process $S_n$ cannot be used for testing because it depends on
unknown quantities. Our final test statistic $T_n$ will therefore be 
based on an estimated version of
$S_n$ with the errors estimated by residuals $\hve_j = Y_j -
\rhat(X_j)$, $j = 1,\ldots,n$, from a sample of $n$ i.i.d.\
random variables $(X_1,\,Y_1),\ldots,(X_n,\,Y_n)$. Here $\rhat$ is a
suitable estimator of the regression function. 
In this article we assume a nonparametric regression model
and estimate the unknown smooth regression function $r$ using
a nonparametric function estimator; see Section \ref{fullmodel} 
for details.

When $\sigma(\cdot) \equiv \sigma_0$ is a constant function (the null
hypothesis is true), we expect the estimated process to behave like
$S_n(t)$ and exhibit a standard limiting behaviour. However, if
$\sigma(\cdot)$ is non-constant (the alternative hypothesis is true),
the residuals $\hve_j$ will estimate $\ve_j = \sigma(X_j)\err_j \neq
\sigma_0\err_j$ (and the weights $W_j$ and the errors $\ve_j = \sigma(X_j)
\err_j$ will not be independent). We expect the estimated process will
show a different limiting behaviour in this case. Note that our test
exploits just the independence--dependence structure between the
covariates and the errors. 
For this reason it is also clear that it will only work in our model, which
specifically assumes $\ve = \sigma(X) \err$, to test $H_0$ ``$\Var[Y|X]$
is constant'', and not in models $Y = r(X) + \ve$, where $\ve$ may depend 
on $X$ in a more general way.

The choice of the weights, i.e.\ of the detection function $\omega$,
is important to guarantee that the tests are powerful:
it is clear that $\omega$ must be non-constant to detect
heteroskedasticity. If the alternative hypothesis is true, it
will be advantageous to have weights that are highly
correlated with the scale function $\sigma$ to increase the power
of the test.
We explain this at the end of Section
\ref{fullmodel}, where we also
construct weights based on an estimate $\shat(\cdot)$ of $\sigma(\cdot)$.

Tests for heteroskedasticity are well studied for various regression
models. Glej\-ser (1969) forms a test using the absolute values of the
residuals of a linear regression fitted by ordinary
least-squares. White (1980) constructs an estimator of the covariance
matrix of the ordinary least-squares estimator in linear regression
and proposes a test based on this estimator. Cook and Weisberg (1983)
derive a score test for a parametric form of the scale function of the
errors in a linear regression. Eubank and Thomas (1993) study a test
for heteroskedasticity, which is related to the score test, for the
nonparametric regression model with normal errors. 
Although the last article studies a nonparametric regression
function, all these tests are parametric tests since the heteroskedacticity
is modelled parametrically. It is therefore possible that the tests have no 
power if those models are misspecified.

More recent papers that test for 
a parametric scale function $\sigma_\theta$ are Dette, Neumeyer and 
Van Keilegom (2007) and Dette and Hetzler (2009).
Dette et al.\  (2007) construct a test for nonparametric regression with 
univariate covariates based on the Kolomogorov--Smirnov and 
the Cramer--von--Mises statistic. They recommend a bootstrap approach to 
implement the tests. Dette and Hetzler (2009) construct a test for 
a (univariate) fixed design nonparametric regression model using an 
empirical process. The tests proposed in both papers converge at the 
root-$n$ rate.

Our approach is completely nonparametric (and thus more robust
than parametric approaches). Let us therefore look more closely
at some competing nonparametric approaches.
Dette and Munk
(1998) and Dette (2002) both consider nonparametric regression
with fixed design (i.e.\ random univariate or multivariate
covariates are not treated). Dette
and Munk create tests based on the $L_2$ distance between an
approximation of the variance function and $\sigma_0$;
in the 2002 paper Dette proposes a residual-based test using
kernel estimators. In both papers the test statistics are asymptotically 
normal with variance to be estimated. The convergence rate is
slower than the parametric root-$n$ rate (which is the rate of our test), 
and local alternatives of order $n^{-1/4}$ (Dette and Munk, 1998) or of 
order $n^{-1/2}h_n^{-1/4}$ (with bandwidth $h_n \to 0$) can be detected 
(Dette, 2002).

The approach in Dette (2002) is extended to the case of a
partially linear regression by You and Chen (2005) and Lin and Qu
(2012). 
The same idea is also used in Zheng (2009), who proposes a local smoothing
test for nonparametric regression, now with multivariate covariates, which
is also our scenario. The test statistic is again asymptotically normally 
distributed and requires a consistent estimator of the variance. 
The test is able to detect local alternatives of order $n^{-1/2} h_n^{-m/4}$, 
where $m$ is the dimension of the covariate vector, which is in agreement 
with the order in Dette (2002) for the case of a univariate $X$, $m=1$.
The test statistic is quite involved, using multivariate Nadaraya--Watson 
estimators.  
A wild bootstrap approach is used to implement the test since 
the normal approximation ``may not be accurate (...) in finite samples'' 
(Zheng, 2009, Section 5). Zheng's approach was used by Zhu et al.\ 
(2016). These authors use single-index type models (involving linear 
combinations of covariates) for the mean and the variance functions 
in order to handle high-dimensional covariates. Again, these
tests converge with a rate slower than root-$n$ and can only detect
local alternatives of order $n^{-1/2} h_n^{-q/4}$, where now $q$ is 
the number of linear combinations.

Let us finally have a closer look at two articles which, like our paper, use 
weighted empirical processes, namely Zhu et al.\ (2001) and Koul and Song 
(2010). Zhu et al.\  
use a Cramer--von--Mises type statistic of a marked empirical process
of (multivariate) {\em covariates} $X$ instead of univariate {\em residuals}
$\hve$, 
i.e.\ involving $\1(X \le x)$ instead of $\1(\hve \le t)$. 
Their test can detect local alternatives ``up to'' order
$n^{-1/2}$ and, for the case of univariate covariates, may be 
asymptotically distribution free (``under some condition''). 
The authors point directly
to the bootstrap to obtain suitable quantiles since the limiting distribution
of their statistic has a complicated covariance structure. 
However, the approach using marked 
empirical processes of {\em covariates} seems 
to be problematic when the covariate $X$ is 
multivariate because of possible dependencies between the components 
of $X$, as pointed out in Stute et al.\ (2008, p.\ 454).
Koul and Song (2010) check parametric models for the variance function and 
therefore are not directly comparable with our approach, but   
their tests have some desirable properties as well: they are 
distribution free (converging to a Brownian motion), 
and able to detect alternatives of order $n^{-1/2}$. 
Koul and Song's tests are based on a Khmaladze type transform of a marked 
empirical process of univariate {\em covariates}. Therefore a 
generalisation to the case of a multivariate $X$ does not seem to be 
advisable. 

Summing up, our approach is new in that we are the first to use
a completely nonparametric approach based on weighted ``marked'' 
empirical process of (univariate) residuals to test for heteroskedasticity. 
Our tests achieve the parametric rate root-$n$, which, so far, 
could only be achieved in Zhu et al.\ (2001), or if the tests involve 
some parametric component, e.g.\ a parametric model for the
variance function. Another advantage of our method is that the proposed 
tests are asymptotically distribution free (and quantiles are readily 
available), while most of the competing 
approaches require bootstrap to implement the test.

In this article we are also interested in the case when responses $Y$ are
missing at random (MAR), which we call the ``MAR model'', in order 
to distinguish it from the ``full model'', when all data are completely 
observed. In the MAR model the observations can be written as independent 
copies $(X_1,\delta_1 Y_1,\delta_1),\ldots,(X_n,\delta_n Y_n,\delta_n)$ of a
base observation $(X,\delta Y,\delta)$, where $\delta$ is an indicator
which equals one if $Y$ is observed and zero otherwise. 
Assuming that responses are {\em missing at random} means the
distribution of $\delta$ given the pair $(X,Y)$ depends only on the
covariates $X$ (which are always observed), i.e.\      
\begin{equation*}
P(\delta = 1|X,Y) = P(\delta = 1|X) = \pi(X).
\end{equation*}
This implies that $Y$ and $\delta$ are conditionally independent given
$X$. Assuming that responses are missing at random is often
reasonable; see Little and Rubin (2002, Chapter 1). Working with this
missing data model is advantageous because the missingness mechanism
is ignorable, i.e.\ $\pi(\cdot)$ can be estimated. It is therefore
possible to draw valid statistical conclusions without auxiliary
information, in contrast to the model with data that are ``not
missing at random'' (NMAR). Note how the MAR model covers the full
model as a special case with all indicators $\delta$ equal to $1$,
hence $\pi(\cdot) \equiv 1$.

We will show that our test statistics $T_n$, defined
in \Ref{Tn} for the full model, and $T_{n,c}$, defined in \Ref{Tc} for
the MAR model, may be used to test for the presence of
heteroskedasticity. The subscript ``c'' indicates that our test statistic
$T_{n,c}$ uses only the completely observed data; i.e.\ we use only
observations $(X,Y)$ where $\delta$ equals one, called the {\em
  complete cases}. In particular, we use only the available residuals
$\hve_{j,c} = Y_j - \rhat_c(X_j)$, where $\rhat_c$ is a suitable complete
case estimator of the regression function $r$. Demonstrating this will
require two steps. First, we study the full model and provide the
limiting distribution of the test statistic $T_n$ under the null
hypothesis in Theorem \ref{thmTn}. Then we apply the {\em transfer principle} 
for complete case statistics (Koul et al., 
2012) to adapt the results of Theorem \ref{thmTn} to the MAR model.

Since residuals can only be computed for data $(X,Y)$ that are
completely observed, the transfer principle is useful for developing 
residual-based statistical procedures in MAR regression models.
Our proposed (residual-based) tests are asymptotically
distribution free. The transfer principle guarantees, in
this case, that the test statistic and its complete case version have
the same limiting distribution (under a mild condition). This means that
one can simply omit the incomplete cases and work with the same quantiles
as in the full model, which is desirable due to its simplicity. 

This article is structured as follows. Section \ref{fullmodel}
contains the statement of the test statistic and the asymptotic
results for the full model. Section \ref{marmodel} extends the 
results of the full model to the MAR model. Simulations in 
Section \ref{simstudy} investigate the performance of these tests. 
Technical arguments supporting the results in Section
\ref{fullmodel} are given in Section \ref{aux}. 
Section \ref{conrem} concludes the article with a discussion 
of possible extensions of the proposed methodology.


\section{Completely observed data}   \label{fullmodel}
We begin with the full model and require the following standard
condition (which guarantees good performance of nonparametric
function estimators):
\begin{assumption} \label{AssumpG}
The covariate vector $X$ is quasi-uniform on the cube $[0,1]^m$; i.e.\
$X$ has a density that is bounded and bounded away from zero on
$[0,1]^m$. 
\end{assumption}

As in M\"uller, Schick and Wefelmeyer (2009), we require the
regression function to be in the H\"older space $H(d,\gamma)$, i.e.\
it has continuous partial derivatives of order $d$ (or higher) and the
partial $d$-th derivatives are H\"older with exponent $\gamma \in (0,1]$. 
We estimate the regression function $r$ by a local polynomial smoother
$\rhat$ of degree $d$. The choice of $d$ will not
only depend on the number of derivatives of $r$, but also on the
dimension $m$ of the covariate vector. (We will need more smoothness
if $m$ is large.)  We write $F$ and $f$ for the distribution function
and the density of the errors $\sigma_0 \err$ which will have to
satisfy certain smoothness and moment conditions.

 In order to describe the local polynomial smoother, let $i =
(i_1,\ldots,i_m)$ be a multi-index and $I(d)$ be the set of
multi-indices that satisfy $i_1 + \ldots + i_m \leq d$. 
Then $\rhat$ is defined as the component $\hat \beta_0$ corresponding 
to the multi-index $0=(0,\dots,0)$ of a minimiser
\begin{equation}
\label{hatbeta}
\hat \beta  = \argmin_{\beta=(\beta_i)_{i\in I(d)}} \sj \Big\{
 Y_j - \sum_{i\in I(d)}\beta_i \psi_i \Big(\frac{X_j - x}{c_n}\Big)
 \Big\}^2 w\Big(\frac{X_j - x}{c_n}\Big),
\end{equation}
where 
\begin{equation*}
\psi_i(x) = \frac{x_1^{i_1}}{i_1!}\cdots\frac{x_m^{i_m}}{i_m!},
 \qquad x=(x_1,\dots,x_m) \in [0,1]^m,
\end{equation*}
$w(x)= w_1(x_1)\cdots w_m(x_m)$ is a product of probability densities 
with compact support, and $c_n$ is a bandwidth. A typical choice 
for $w_i$ would be the Epanechnikov or the tricube kernel.
The estimator $\rhat$ was studied in M\"uller et al.\ 
(2009), who provide a uniform expansion of an empirical distribution
function based on residuals 
$$
\hve_j = Y_j - \rhat(X_j), \quad j = 1,\ldots,n. 
$$
The proof uses results from a crucial technical lemma,
Lemma 1 in that article (written here as Lemma \ref{lemrhat} in Section
\ref{aux}), which gives important asymptotic properties of $\rhat$.
We will use these properties in Section \ref{aux} to derive the limiting
distribution of our test statistic, which is based on a weighted version of 
the empirical distribution function proposed by M\"uller et al.\ (2009).

For the full model, the test statistic is given as 
\begin{equation} \label{Tn}
T_n = \sup_{t \in \R} \Big| n^{-1/2} \sj
 \hW_j\1\big[\hve_j \leq t\big] \Big|
\end{equation}
with
\begin{equation} \label{hWj}
\hW_j = \Big\{\omega(X_j) - \avk \omega(X_k)\Big\}
\Big \slash
\Big[\avm \Big\{\omega(X_m) - \avk \omega(X_k)\Big\}^2\Big]^{1/2},
\quad \omega \in \Sigma,
\end{equation}
for $j=1,\ldots,n$. The term in absolute brackets of \Ref{Tn} 
is an approximation (under $H_0$) of the process $S_n(t)$ from
the Introduction, now with the standardised weights $W_j$ from 
\Ref{Wj} replaced by empirically estimated weights $\hat W_j$.
Recall that $\omega$ must be a non-constant function, i.e.\ 
$\omega \in \Sigma$, which is crucial to guarantee that the test 
is able to detect heteroskedasticity.
 
We arrive at our main result, the limiting distribution for the test 
statistic $T_n$ in the fully observed model. The proof is given in
Section \ref{aux}.

\begin{theorem} \label{thmTn}
Let the distribution $G$ of the covariates $X$ satisfy Assumption
\ref{AssumpG}. Suppose the regression function $r$ belongs to the
H\"older space $H(d,\gamma)$ with $s = d + \gamma > 3m/2$; the
distribution $F$ of the error variable $\sigma_0\err$ has mean zero, a
finite moment of order $\zeta > 4s/(2s - m)$ and a Lebesgue density
$f$ that is both uniformly continuous and bounded; the kernel
functions $w_1,\ldots,w_m$ used in the local polynomial smoother \Ref{hatbeta}
are $(m+2)-$times continuously differentiable and
have compact support $[-1,1]$. Let $c_n \sim
\{n\log(n)\}^{-1/(2s)}$. 
Let the null hypothesis hold. Then
\begin{equation*}
T_n = \sup_{t \in \R} \Big| n^{-1/2} \sj
 \hW_j\1\big[\hve_j \leq t\big] \Big|
\end{equation*}
with $\hat W_j$ specified in \Ref{hWj} above,
converges in distribution to $\sup_{ t \in [0,1]} |B_0(t)|$, where
$B_0$ denotes the standard Brownian bridge.
\end{theorem}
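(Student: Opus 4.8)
The plan is to work with the unstandardised, empirically centred process
\[
R_n(t) = n^{-1/2}\sj \big\{\omega(X_j) - \avk \omega(X_k)\big\}\1[\hve_j \le t],
\]
so that $T_n = \sup_t |R_n(t)|/\hat\tau_n$ with $\hat\tau_n = [\avm\{\omega(X_m)-\avk\omega(X_k)\}^2]^{1/2}$. By the law of large numbers $\hat\tau_n \to \tau := \Var[\omega(X)]^{1/2}$ in probability, so by Slutsky it suffices to prove $R_n \rightsquigarrow \tau B_0(F(\cdot))$ in $D[-\infty,\infty]$ and then apply the continuous mapping theorem to the functional $\sup_t|\cdot|$; since $F$ is continuous with range $(0,1)$ and $B_0$ vanishes at the endpoints, $\sup_t|\tau B_0(F(t))|/\tau = \sup_{u\in[0,1]}|B_0(u)|$.

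The conceptual heart is a decomposition that explains the appearance of the bridge. Writing $\mu = E[\omega(X)]$ and $Z_n = n^{-1/2}\sj\{\omega(X_j)-\mu\}$, I would split
\[
R_n(t) = A_n(t) - Z_n \hat F_n(t), \qquad A_n(t) = n^{-1/2}\sj\{\omega(X_j)-\mu\}\1[\hve_j\le t],
\]
where $\hat F_n(t)=\avj\1[\hve_j\le t]$ and $Z_n = A_n(\infty)$ is exactly the endpoint of $A_n$. Once $A_n$ is shown (next step) to be asymptotically equivalent to its error-based analogue $A_n^0(t) = n^{-1/2}\sj\{\omega(X_j)-\mu\}\1[\ve_j\le t]$, a functional central limit theorem applies: the summands are i.i.d., mean zero (because $X$ and $\ve$ are independent under $H_0$, so $E[\{\omega(X)-\mu\}\1[\ve\le t]]=0$), the envelope $|\omega(X)-\mu|$ is square integrable ($\omega\in\Sigma\subset L_2(G)$), and the indicators $\{\1[\cdot\le t]\}$ form a VC class, so $A_n^0\rightsquigarrow \tau B(F(\cdot))$ with $B$ a standard Brownian motion (the limit having covariance $\tau^2 F(s\wedge t)$). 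Since $\hat F_n\to F$ uniformly in probability and $Z_n=A_n^0(\infty)\to \tau B(1)$ jointly, Slutsky gives $R_n(t)\rightsquigarrow \tau B(F(t)) - \tau B(1)F(t) = \tau\{B(F(t))-F(t)B(1)\} = \tau B_0(F(t))$. The subtraction of the endpoint is precisely what converts the Brownian motion into the Brownian bridge, and it is the empirical centring $\avk\omega(X_k)$ built into $\hW_j$ that produces it.

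The main obstacle is the residual step: showing
\[
\sup_{t\in\R}\Big|n^{-1/2}\sj\{\omega(X_j)-\mu\}\big(\1[\hve_j\le t]-\1[\ve_j\le t]\big)\Big| = o_p(1).
\]
Writing $\hve_j = \ve_j - \{\rhat(X_j)-r(X_j)\}$, the route is to separate the conditional-mean (drift) part $f(t)\,n^{-1/2}\sj\{\omega(X_j)-\mu\}\{\rhat(X_j)-r(X_j)\}$ from an oscillation remainder. For the drift I would insert the uniform expansion of $\rhat$ from Lemma \ref{lemrhat}: its bias contributes $O(c_n^{s})\,n^{-1/2}\sj\{\omega(X_j)-\mu\}=O_p(c_n^s)=o_p(1)$, since $c_n^s\sim\{n\log n\}^{-1/2}$ and $f$ is bounded, while its leading stochastic term is a bilinear form in the centred weights and the errors whose variance is $O(n^{-1}c_n^{-m})=o(1)$ under the assumed $s>3m/2$. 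This is exactly the cancellation promised by the mean-zero weights together with the independence of $\omega(X)-\mu$ from the errors under $H_0$, and it is why no drift survives. The oscillation remainder is controlled by the modulus-of-continuity bound of Lemma \ref{lemTnModulus}, which shows the weighted empirical process fluctuates by only $o_p(1)$ over $t$-neighbourhoods of the size induced by $\sup_j|\rhat(X_j)-r(X_j)|$. The same expansion, applied with $\omega\equiv 1$, also yields $\sup_t|\hat F_n(t)-F(t)|=o_p(1)$, which is all that is needed to handle the product $Z_n\hat F_n(t)$ above.

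Finally I would assemble the pieces: the residual step gives $A_n = A_n^0 + o_p(1)$ uniformly, the Donsker theorem gives the weak limit of $A_n^0$, and the decomposition with Slutsky yields $R_n\rightsquigarrow \tau B_0(F(\cdot))$; dividing by $\hat\tau_n\to\tau$ and applying the continuous mapping theorem to the supremum functional completes the proof. I expect the residual step, and within it the uniform control of both the drift and the oscillation terms via Lemmas \ref{lemrhat} and \ref{lemTnModulus}, to be by far the most delicate part, with the functional central limit theorem and the Slutsky arguments being comparatively routine.
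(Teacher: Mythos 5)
Your bridge-extraction architecture is sound and genuinely different from the paper's: you work with the unstandardised, empirically centred process and obtain the bridge by subtracting the endpoint, $R_n(t)=A_n(t)-Z_n\hat F_n(t)$ with $Z_n=A_n(\infty)$, so that $\tau B\circ F$ minus $\tau B(1)F(\cdot)$ yields $\tau B_0\circ F$. The paper instead standardises the weights from the outset, centres the indicators at $F(t)$, and invokes Theorem 2.2.4 of Koul (2002) to get $B_0\circ F$ in one step, afterwards disposing of the estimated weights $\hW_j$ through the remainder terms $R_3,\dots,R_6$ using $\sj\hW_j=0$ and the consistency of $\hat V$; your Slutsky argument with $\hat\tau_n\to\tau$ plays the same role and is fine. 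Had the residual step been handled correctly, this would be an acceptable alternative proof.

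The genuine gap is your treatment of the drift. You claim the leading stochastic part of $f(t)\,n^{-1/2}\sj\{\omega(X_j)-\mu\}\{\rhat(X_j)-r(X_j)\}$ is a bilinear form ``whose variance is $O(n^{-1}c_n^{-m})$''. That bound tacitly assumes the centred weights are orthogonal to the smoother's stochastic component, which is false. Writing $\rhat(x)-r(x)\approx n^{-1}\sk\mathcal{W}_n(x,X_k)\ve_k$ plus bias and exchanging the sums,
\[
n^{-1/2}\sj\{\omega(X_j)-\mu\}\{\rhat(X_j)-r(X_j)\}
\approx n^{-1/2}\sk\ve_k\Big(\frac1n\sj\{\omega(X_j)-\mu\}\mathcal{W}_n(X_j,X_k)\Big),
\]
and the inner average is a \emph{localised} kernel average of the centred weights: it concentrates near $\omega(X_k)-\mu$, not near zero, because centring $\omega$ kills global averages of the weights but not local ones (the smoother's weights localise at $X_k$ with total mass approximately one). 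Your variance computation of order $n^{-1}c_n^{-m}$ would only be valid if $E[\{\omega(X)-\mu\}\mathcal{W}_n(X,X_k)\,|\,X_k]$ vanished, so the argument as written does not establish negligibility of the drift, which is precisely the delicate point of the whole theorem. The paper proves the corresponding statement ($\sup_t|R_2|=\opn$, second assertion of Lemma \ref{lemTnModulus}) by a mechanism that never expands $\rhat-r$ in this way: it substitutes $\ahat$ from Lemma \ref{lemrhat}, splits off the conditionally centred term $\avj W_j\{F(t+\ahat(X_j))-F(t)-E[F(t+\ahat(X))-F(t)\,|\,\Dset]\}$, handled by the Donsker/equicontinuity property of the class $\Fclass_2$ under a small-variance condition, and isolates the factorised product $E[F(t+\ahat(X))-F(t)\,|\,\Dset]\cdot\avj W_j$, in which the mean-zero weights enter only through the CLT bound $\avj W_j=\Opn$ --- not through any smallness of your bilinear form. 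If you wish to keep your architecture, you must replace the variance bound by an argument of this type, and observe that your bilinear form reappears there as the conditional covariance $E[W\{F(t+\ahat(X))-F(t)\}\,|\,\Dset]$, so it must be genuinely absorbed rather than dismissed. A smaller point: your appeal to the ``modulus-of-continuity bound of Lemma \ref{lemTnModulus}'' for the oscillation term borrows the paper's main lemma; a self-contained proof needs the equicontinuity argument over the class $\{(x,e)\mapsto\1[e\le t+a(x)]-F(t+a(x)):t\in\R,\ a\in H_1(m,\alpha)\}$ combined with \Ref{ahatinH1} and \Ref{rhatrahatneg}.
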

The distribution of $\sup_{t \in [0,1]} |B_0(t)|$ is a standard
distribution, whose upper $\alpha$-quantiles $b_\alpha$ can be
approximately calculated using formula (12) on page 34 of Shorack
and Wellner (1986), i.e.\
\begin{equation*}
P\Big( \sup_{t \in [0,1]} \big|B_0(t)\big| \leq b \Big) 
 = \sqrt{ \frac{2\pi}{b} } \sum_{k = 1}^{\infty}
 \exp\Big( -\frac{ ( 2k - 1 ) ^ 2 \pi^2 }{ 8 b^2 } \Big),
 \qquad b > 0.
\end{equation*}
We calculate that $b_{0.05} = 1.1779$ is an appropriate quantile for
a $5\%$ level test.

\begin{remark}[{\sc power under fixed alternatives}] \label{remTnpow1}
It is possible that the test has no power if the detection function
$\omega$ is not properly chosen. To see this consider the difference
$$
E\big[\big\{\omega(X) - E\big[\omega(X)\big]\big\} \1[\ve \leq t]\big]
$$
from equation \Ref{diffomega} in the introduction, which is zero under
$H_0$. The test has no power if the difference is also zero under $H_a$,
i.e.\ if 
$$
E\big[ \omega(X) \1[ \ve \leq t] \big]
=
E\Big[ \omega(X) F\Big(\frac t {\sigma(X)} \Big)\Big]
=
E\{\omega(X)\} E \Big[ F\Big(\frac t {\sigma(X)} \Big)\Big]
=
E\{\omega(X)\} E \big[ \1[ \ve \leq t]   \big],
$$
which means that $\omega(X)$ and $F(t/\sigma(X))$ are uncorrelated.
This happens, for example, if $X$ and $\ve$ are both uniformly distributed
on $[0,1]$ and if $\sigma(X) = [1 + \sin(2 \pi X)]^{-1}$
and $\omega(X) = 1 + \cos(2 \pi X)$.
Then $E[ \omega(X) F(t/\sigma(X))] = t E[ \omega(X)/\sigma(X)]$. It is
easy to check that this indeed equals $t E[ \omega(X)] E[1/\sigma(X)]$ 
since  $E[ \omega(X)/\sigma(X)] = E[ \omega(X)] = E[1/\sigma(X)] =1$. 
\end{remark}

\begin{remark}[{\sc power under local alternatives}] \label{remTnpower} 
To derive the power of the test under local alternatives of the form
$\sigma = \sigma_{n\Delta} = \sigma_0 + n^{-1/2}\Delta$
with $\sigma_{n\Delta} \in \Sigma$,
we use Le Cam's third lemma. This result states that a local shift
$\Delta$ away from the null hypothesis results in an additive shift of
the asymptotic distribution of $T_n$; see e.g.\ page 90 of van der
Vaart (1998). The shift is calculated as the covariance between $T_n$
and $\log(dF_{n\Delta}/dF)$ under $H_0$ where $\ve = \sigma_0\err$. 
Here $F_{n\Delta}(t) = P(\{\sigma_0 + n^{-1/2}\Delta(X)\}\err \leq t\,|\,X)$.
A brief sketch shows $E[T_n \log(dF_{n\Delta}/dF) ]$ is equal to
\begin{align*}
&E\Big[ \Big\{ n^{-1/2}\sj
 W_j\1\big[\sigma_0\err_j \leq t\big] \Big\}
 \Big\{n^{-1/2}\sj \Delta(X_j) + n^{-1/2} \sj \Delta(X_j)
 \sigma_0 \err_j \frac{f'(\sigma_0\err_j)}{f(\sigma_0\err_j)}
 \Big\} \Big] + o(1) \\
&= E[W\Delta]F(t) +
 E[W\Delta]\int_{-\infty}^t s \, \frac{f'(s)}{f(s)}\,F(ds) + o(1) \\
&= t \, f(t)E[W\Delta] + o(1).
\end{align*}
Here we have, for simplicity, assumed that
$F$ is differentiable with finite Fisher information for location and scale.
Hence, under a contiguous alternative $H_a$, the distribution of the
test statistic $T_n$ limits to 
$\sup_{t \in [0,1]}|B_0(t) + F^{-1}(t) \{f \circ F^{-1}(t)\}E[W\Delta]|$, 
writing $F^{-1}$ for the quantile function of $F$.
\end{remark}

Since the weights in our test statistic are standardised, only the
shape of $\omega$ may have an effect on the behaviour of the statistic
-- location and scale have no influence. From Remark \ref{remTnpower}
we know that the test has no power under local alternatives if $E[W
\Delta] = 0$, i.e.\ by definition of $W$, if the detection function
$\omega(X)$ and the scale function $\sigma(X) = \sigma_0 +
n^{-1/2}\Delta(X)$ are uncorrelated. This happens, for example, if $X$
has a standard uniform distribution, if $\Delta(X) = 1 + \sin(2 \pi
X)$ and if we choose $\omega(X) = 1 + \cos(2 \pi X)$; cf.\ Remark
\ref{remTnpow1}.

From Remark \ref{remTnpower} it is also clear that
the power of the test increases with $E(W \Delta)$, i.e.\ if
$\omega(X)$ and $\Delta(X)$ are highly correlated.
So it can be expected that the test will perform best when $\omega$ is a linear
transformation of the scale function $\sigma$. This suggests choosing
$\omega$ similar in shape to $\sigma$, in order to obtain a powerful test. 
We propose using $\omega = \hat \sigma$, where $\hat \sigma$ is
a consistent estimator of $\sigma$. 
Assume for simplicity that the regression function $r$ and the second
conditional moment $r_2(\cdot) = E[Y^2\,|\,X = (\cdot)]$ of $Y$ given
$X$ belong to the same H\"older class $H(d,\gamma)$ with $s = d +
\gamma$. Then $r_2$ can be estimated by a local polynomial smoother
$\rhat_2$, which is defined as $\rhat$ in \eqref{hatbeta} but now
with $Y_j^2$ in place of $Y_j$, $j = 1,\ldots,n$. This leads to an
estimator of $\sigma$ using $\shat = \{\rhat_2 -
{\rhat}^2\}^{1/2}$. The estimated weights are then given by
\begin{equation} \label{tWj}
\tilde W_j = \Big\{\hat \sigma(X_j) - \avk\hat \sigma(X_k)\Big\}
 \Big\slash\Big[
 \avm\Big\{\hat \sigma(X_m) - \avk\hat \sigma(X_k)\Big\}^2
 \Big]^{1/2}
\end{equation}
for $j=1,\ldots,n$.

The formal result for this choice of weights is given in Theorem 
\ref{thmTnWhat} below.

Note that the weights in \Ref{tWj} are non-degenerate under the
null hypothesis because the terms in the numerator and in the denominator 
have the same asymptotic order. Since the statistic converges weakly to 
a Gaussian process, which is determined by its mean and covariance functions, 
and since our weights are scaled and centred, they do not affect the
asymptotic distribution.

\begin{theorem} \label{thmTnWhat}
Suppose the assumptions of Theorem \ref{thmTn} are satisfied with the 
error variable $\sigma_0e$ having a finite moment of order larger than
$8$. Assume that $r$ and $r_2$ belong to the H\"older space
$H(d,\gamma)$ with $s = d + \gamma > 3m/2$. Then under the null
hypothesis
\begin{equation*}
\tilde T_n = \sup_{t \in \R}\Big|
 n^{-1/2} \sj \tilde W_j\1\big[\hve_j \leq t\big] \Big|
\end{equation*}
with $\tilde W_j$ specified in \Ref{tWj} above,
converges in distribution to $\sup_{t \in \R}|B_0(t)|$, where $B_0$
denotes the standard Brownian bridge. In addition, $\tilde T_n$
consistently detects alternative hypotheses $\sigma \in H(d,\gamma)$,
and $\tilde T_n$ is asymptotically most powerful for detecting local
alternative hypotheses of the form $\sigma_0 + n^{-1/2}\Delta(\cdot)$,
where $\Delta \in H(d,\gamma)$.
\end{theorem}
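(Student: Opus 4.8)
The plan is to reduce everything to the weighted empirical process theory already developed for Theorem \ref{thmTn}, the only genuinely new feature being that the deterministic detection function $\omega$ is replaced by the data-dependent estimate $\shat$. Since $\tilde T_n$ is built from the \emph{same} residuals $\hve_j = Y_j - \rhat(X_j)$ as $T_n$ and differs only through the weights $\tW_j$ of \Ref{tWj}, the contribution of the indicator part $\1[\hve_j \le t]$ is controlled exactly as in Section \ref{aux} via Lemma \ref{lemrhat}. What must be shown is that the machinery of Section \ref{aux} still applies when the deterministic weights $\hW_j$ are replaced by the random weights $\tW_j$. First I would isolate the properties of the weights that this machinery actually uses: exact centring $\avj \tW_j = 0$ and exact scaling $\avj \tW_j^2 = 1$ (both hold by construction of \Ref{tWj}), a Lindeberg-type negligibility condition $\avj \tW_j^2 \1[|\tW_j| > \eta \sqrt n] \to 0$ in probability for every $\eta > 0$, and asymptotic uncorrelatedness of the weights with the error indicators, so that the off-diagonal terms in the covariance vanish and the limiting covariance is the Brownian-bridge covariance $F(s \wedge t) - F(s)F(t)$ after the empirical centring ties the process down at $t = +\infty$.

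Verifying these conditions is the technical heart of the argument and requires a uniform stochastic expansion of $\shat$. Writing $\shat^2 = \rhat_2 - \rhat^2$ and $\sigma_0^2 = r_2 - r^2$ under $H_0$, I would expand $\shat(X_j) - \sigma_0 = (2\sigma_0)^{-1}\{(\rhat_2(X_j) - r_2(X_j)) - (\rhat(X_j) - r(X_j))(\rhat(X_j) + r(X_j))\}$ up to terms of smaller order, and substitute the local-polynomial expansions of $\rhat$ and $\rhat_2$ supplied by Lemma \ref{lemrhat} (applied to $\rhat_2$ with the response $Y^2$, which is the reason the moment condition is strengthened to order larger than $8$: estimating the second-moment function $r_2$ requires fourth-order control of $Y$, hence eighth-order moments of the errors). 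This produces, uniformly in $j$, a leading term for $\shat(X_j) - \sigma_0$ that is a local average of mean-zero functions of the errors near $X_j$. The crucial structural point is that the \emph{own} error $\ve_j$ enters $\shat(X_j)$ only through the single smoother weight at $X_j$, of order $(n c_n^m)^{-1}$, so that $\tW_j$ and $\1[\hve_j \le t]$ are asymptotically independent; this is exactly what kills the cross terms and yields the bridge covariance, and the same expansion together with the uniform boundedness of the smoother delivers the Lindeberg condition. I would stress that under $H_0$ both numerator and denominator of \Ref{tWj} are of the common order $(n c_n^m)^{-1/2}$, so $\tW_j$ is $\Op$ and non-degenerate even though $\shat$ itself collapses to the constant $\sigma_0$; there is no deterministic limiting detection function here, which is precisely why one cannot simply quote Theorem \ref{thmTn} but must re-run the covariance computation for the standardised estimation noise. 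Assembling these facts, the finite-dimensional distributions are asymptotically Gaussian with the bridge covariance and tightness follows as in Section \ref{aux}, giving $\tilde T_n \Rightarrow \sup_{t \in [0,1]} |B_0(t)|$.

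For consistency against a fixed alternative $\sigma \in H(d,\gamma)$, the expansion of the previous paragraph simplifies: now $\shat \to \sigma$ uniformly (again by Lemma \ref{lemrhat} applied to $\rhat_2$), so $\tW_j$ converges to the standardised oracle weight $\{\sigma(X_j) - E[\sigma(X)]\}/\Var[\sigma(X)]^{1/2}$, i.e.\ the deterministic detection function $\omega = \sigma$. The process then carries a non-vanishing population drift $\sqrt n\, E[W\1[\ve \le t]]$, and by the computation in Remark \ref{remTnpow1} this equals a nonzero multiple of the correlation between $\sigma(X)$ and $F(t/\sigma(X))$, which is nonzero for some $t$ whenever $\sigma$ is non-constant. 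Hence $\tilde T_n$ diverges in probability and the test is consistent.

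The local-power and optimality claim is read off from Remark \ref{remTnpower}: for weights $W$ the local shift $\sigma = \sigma_0 + n^{-1/2}\Delta$ moves the limit to $\sup_{t}|B_0(t) + F^{-1}(t)\{f \circ F^{-1}(t)\}E[W\Delta]|$, so the power is governed entirely by $|E[W\Delta]|$. Taking $\omega = \sigma$ gives $W \propto \Delta - E[\Delta]$ and hence $E[W\Delta] = \Var[\Delta]^{1/2}$, while for \emph{any} centred, unit-variance weight the Cauchy--Schwarz inequality gives $E[W\Delta] = E[W(\Delta - E[\Delta])] \le \Var[\Delta]^{1/2}$, with equality exactly when $W \propto \Delta - E[\Delta]$. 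Thus matching $\omega$ to the shape of $\sigma$ maximises the asymptotic shift, which is the sense in which using $\omega = \shat$, a consistent estimator of the shape of $\sigma$, is asymptotically most powerful. I expect the main obstacle throughout to be the uniform expansion of the \emph{estimated} weights and the proof of their asymptotic independence from the error indicators in the degenerate $0/0$ regime under $H_0$, since there is no fixed limiting weight to compare against and one must control the standardised estimation noise directly; this is also the delicate point when transferring the oracle optimality above to the feasible, data-dependent weights.
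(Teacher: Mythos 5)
Your treatment of the consistency and local-optimality claims matches the paper's own reasoning: the paper disposes of these by citing Remark \ref{remTnpower} and the discussion following it together with consistent estimation of the scale, and your Cauchy--Schwarz step ($E[W\Delta]=E[W\{\Delta-E[\Delta]\}]\le \Var[\Delta]^{1/2}$, with equality iff $W\propto \Delta-E[\Delta]$) is precisely the formalisation of that discussion; your fixed-alternative argument via the nonvanishing drift for $\omega=\sigma$ is likewise sound. For the first claim (the null limit with estimated weights) the paper's actual proof lives in supplementary material; the visible text only records that the weights of \Ref{tWj} are exactly centred and scaled, that numerator and denominator degenerate at the same rate under $H_0$, and that centred, scaled weights do not alter the mean and covariance of the Gaussian limit. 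Your program is compatible with that outline, and you correctly identify the $0/0$ regime and the dependence of $\tW_j$ on the errors as the crux.

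Precisely at that crux, however, your argument has a genuine gap. You claim that because $\ve_j$ enters $\shat(X_j)$ only through a single smoother weight of order $(nc_n^m)^{-1}$, the weight $\tW_j$ and the indicator $\1[\hve_j\le t]$ are asymptotically independent, and that this ``kills the cross terms.'' Pairwise asymptotic independence does not suffice, because the self-normalisation inflates the own-error component: under $H_0$ the scaling factor $\hat s=[\avm\{\shat(X_m)-\avk\shat(X_k)\}^2]^{1/2}$ is itself of order $(nc_n^m)^{-1/2}$ (as you note), so the $\ve_j$-component of $\tW_j$ has order $(nc_n^m)^{-1}/\hat s \asymp (nc_n^m)^{-1/2}$, not $(nc_n^m)^{-1}$. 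Linearising $\shat^2=\rhat_2-\rhat^2$ gives the own-error term $\propto (\ve_j^2-\sigma_0^2)/\{2\sigma_0\, nc_n^m g(X_j)\,\hat s\}$, whence $\operatorname{Cov}\big(\tW_j,\1[\ve_j\le t]\big)\asymp (nc_n^m)^{-1/2}c(t)$ with $c(t)=\int_{-\infty}^t(s^2-\sigma_0^2)\,dF(s)$, which is generically nonzero (for Gaussian errors $c(t)=-\sigma_0^2\,t f(t)$). The mean of $n^{-1/2}\sj \tW_j\{\1[\ve_j\le t]-F(t)\}$ aggregates $n$ such terms and is therefore of order $n^{1/2}(nc_n^m)^{-1/2}=c_n^{-m/2}\to\infty$; the residual-shift term (the analogue of $R_2$, now with weights correlated with $\rhat-r$) contributes a further $c_n^{-m/2}f(t)E[\ve^3]$-type quantity, and no generic cancellation between the two is apparent since $c(t)$ and $f(t)$ differ in shape. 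So the step ``asymptotic independence $\Rightarrow$ bridge covariance and vanishing drift'' fails as argued: a correct proof must control this aggregated cross term directly, e.g.\ through an exact higher-order expansion exhibiting cancellation, a leave-one-out or sample-splitting construction of $\shat$, or whatever device the supplement employs. Your exact centring/scaling observations and the eighth-moment heuristic (Lemma \ref{lemrhat} applied to the regression of $Y^2$ needs moments of $\ve^2$) are fine, but the covariance and Lindeberg ``verifications'' remain a program, unexecuted exactly at the point you yourself flag as the main obstacle.
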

The first part of Theorem \ref{thmTnWhat} is verified in the
supplementary online materials. The last statement concerning the
power of the test follows from Remark \ref{remTnpower} and the 
discussion afterwards, combined with
the fact that we can consistently estimate the scale provided $\Delta
\in H(d,\gamma)$. 


\section{Responses missing at random} \label{marmodel}
We now consider the MAR model. The complete case test statistic is
given by
\begin{align} \label{Tc}
T_{n,c} 
= \sup_{t \in \R} \Big| N^{-1/2} \sj \delta_j \hW_{j,c}
 \1\big[\hve_{j,c} \leq t\big] \Big|, 
\quad \mbox{ with } \, \hve_{j,c} = Y_j - \rhat_c(X_j).
\end{align}
Here $N = \sj \delta_j$ is the number of complete cases and
$\hW_{j,c}$ denotes the weights from equation \Ref{hWj} in the previous 
section, which are now constructed using only the complete cases. The estimator 
$\rhat_c$ is the complete case version of $\rhat$; i.e.\ the component 
$\hat \beta_{c,0}$ corresponding to the multi-index $0=(0,\dots,0)$ of 
a minimiser
\begin{equation*}
\hat \beta_c = \argmin_{\beta=(\beta_i)_{i\in I(d)}} \sj
 \delta_j\Big\{
 Y_j- \sum_{i\in I(d)} \beta_i \psi_i \Big(\frac{X_j - x}{c_n}\Big)
 \Big\}^2 w\Big(\frac{X_j - x}{c_n}\Big),
\end{equation*} 
which is defined as in the previous section, but now also involves 
the indicator $\delta_j$. 

The transfer principle for complete case statistics (Koul et al.,
2012) states that if the limiting distribution of a statistic in the 
full model is $\mathcal L(Q)$, with $Q$ the joint distribution of
$(X,Y)$, then the distribution of its complete case version in the MAR
model will be $\mathcal L(Q_1)$, where $Q_1$ is the conditional
distribution of $(X,Y)$ given $\delta = 1$. The implication holds
provided $Q_1$ satisfies the same model assumptions as $Q$. For our
problem this means that $Q_1$ must meet the assumptions imposed on
$Q$ by Theorem \ref{thmTn}. It is easy to see how this affects only
the covariate distribution $G$. Due to the independence of $X$ and
$\err$, the distribution $Q$ of $(X,Y)$ factors into the marginal 
distribution $G$ of $X$ and the conditional distribution of $Y$ 
given $X$, i.e.\ the distribution $F$ of the errors $\sigma_0\err$. 
This means we can write $Q = G \otimes F$. 
The MAR assumption implies that $\err$ and $\delta$ are independent.  
Hence the distribution $F$ of the errors remains unaffected when we
move from $Q$ to the conditional distribution $Q_1$ given $\delta = 1$,
and we have $Q_1 = G_1 \otimes F$, where $G_1$ is the distribution
of $X$ given $\delta=1$.
Thus, Assumption \ref{AssumpG} about $G$ must be
restated; we also have to assume the detection function $\omega$ is
square-integrable with respect to $G_1$.
\begin{assumption} \label{AssumpG1}
The conditional distribution $G_1$ of the covariate vector $X$ given
$\delta = 1$ is quasi-uniform on the cube $[0,1]^m$; i.e.\ it has a
density that is bounded and bounded away from zero on $[0,1]^m$.
\end{assumption}

The limiting distribution $\mathcal L(Q)$ of the test statistic in the full 
model in Theorem \ref{thmTn} is given by $\sup_{t \in [0,1]}
|B_0(t)|$. Hence it does {\em not} depend on the joint distribution $Q$
of $(X,Y)$ (or on unknown parameters). This makes
the test particularly interesting for the MAR model, since the limiting 
distribution of the complete case statistic $\mathcal L(Q_1)$ is the same 
as the distribution of the full model statistic,  
$\mathcal L(Q_1) = \mathcal L(Q)$,
i.e.\ it is also given by $\sup_{t \in [0,1]}|B_0(t)|$.
Combining these arguments already provides proof for the
main result for the MAR model.

\begin{theorem} \label{thmTc}
Let the null hypothesis hold. Suppose the assumptions of Theorem
\ref{thmTn} are satisfied, with Assumption \ref{AssumpG1} in place of
Assumption \ref{AssumpG}, and let $\omega \in L_2(G_1)$ be positive
and non-constant. Write
\begin{equation*}
\hW_{j,c} = \Big\{ \delta_j\omega(X_j)
 - \frac{1}{N}\sum_{k=1}^n \delta_k\omega(X_k)\Big\}
 \Big\slash \Big[
 \frac{1}{N}\sm\Big\{\delta_m\omega(X_m)
 - \frac{1}{N}\sum_{k=1}^n \delta_k\omega(X_k)\Big\}^2
 \Big]^{1/2}
\end{equation*}
and $\hve_{j,c} = Y_j - \rhat_c(X_j)$. Then
\begin{equation*}
T_{n,c} = \sup_{t \in \R} \Big| N^{-1/2}\sj 
 \delta_j\hW_{j,c}\1\big[\hve_{j,c} \leq t\big] \Big| 
\end{equation*}
converges in distribution to $\sup_{t \in [0,1]} |B_0(t)|$, where
$B_0$ denotes the standard Brownian bridge.
\end{theorem}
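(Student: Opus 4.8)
The plan is to obtain this result as a direct application of the transfer principle for complete case statistics (Koul et al., 2012), exploiting the fact that the full-model limit in Theorem \ref{thmTn} is distribution free. The strategy is to recognise $T_{n,c}$ as the full-model functional $T_n$ evaluated on the subsample of complete cases, and then to verify that this subsample behaves like an i.i.d.\ sample from a distribution $Q_1$ that meets the hypotheses of Theorem \ref{thmTn}.

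First I would make the identification between $T_{n,c}$ and $T_n$ explicit. Comparing \Ref{Tc} with \Ref{Tn}, the complete case statistic is exactly the full-model statistic applied to the $N = \sj \delta_j$ observations with $\delta_j = 1$: the normalisation $N^{-1/2}$, the empirically centred and scaled weights $\hW_{j,c}$, and the complete-case regression estimator $\rhat_c$ are each the complete-case analogue of their full-model counterparts. Thus $T_{n,c}$ is the image of the same measurable functional that produces $T_n$, now fed the complete cases, which are i.i.d.\ draws from the conditional law $Q_1$ of $(X,Y)$ given $\delta = 1$.

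The core of the argument, and the step requiring the most care, is to verify that $Q_1$ satisfies the assumptions of Theorem \ref{thmTn}. Here I would exploit the independence--dependence structure under the null. Under $H_0$ we have $Y = r(X) + \sigma_0\err$ with $\err$ independent of $X$, while the MAR condition $P(\delta = 1 \mid X, Y) = \pi(X)$ gives $\delta \perp \err \mid X$. Conditioning on $\{\delta = 1\}$ therefore reweights only the covariate marginal, sending $G$ to $G_1$ with $G_1(dx) \propto \pi(x)\,G(dx)$, while leaving the conditional law of $\sigma_0\err$ given $X$ --- and its independence of $X$ --- intact. Consequently $Q_1 = G_1 \otimes F$ with the same error distribution $F$, so the moment and smoothness conditions on $F$ transfer verbatim, Assumption \ref{AssumpG1} supplies the quasi-uniformity of $G_1$ in place of Assumption \ref{AssumpG}, and the hypothesis $\omega \in L_2(G_1)$ guarantees the weights remain well defined and non-degenerate.

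With these checks in place the conclusion is immediate. The transfer principle asserts that the complete-case statistic has limiting law $\mathcal L(Q_1)$, the full-model limit computed under $Q_1$. By Theorem \ref{thmTn} applied with $G_1$ in place of $G$, this limit is $\sup_{t \in [0,1]}|B_0(t)|$. Because that limit is free of the underlying distribution, we obtain $\mathcal L(Q_1) = \mathcal L(Q) = \sup_{t \in [0,1]}|B_0(t)|$, which is the assertion. The main obstacle is genuinely the verification that conditioning on $\delta = 1$ preserves the product structure $Q_1 = G_1 \otimes F$: this is what makes the full-model assumptions survive the passage to the MAR model, and it is exactly the point at which the null hypothesis --- rather than a generic dependence of $\err$ on $X$ --- is used.
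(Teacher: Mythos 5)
Your proposal is correct and follows essentially the same route as the paper: the paper's proof of Theorem \ref{thmTc} is precisely the transfer principle of Koul et al.\ (2012) combined with the observation that, under $H_0$ and MAR, $\err$ is independent of $\delta$, so that $Q_1 = G_1 \otimes F$ with the error law $F$ unchanged, Assumption \ref{AssumpG1} replacing Assumption \ref{AssumpG}, and the distribution-free limit $\sup_{t\in[0,1]}|B_0(t)|$ giving $\mathcal L(Q_1)=\mathcal L(Q)$. Your added detail that $G_1(dx) \propto \pi(x)\,G(dx)$ is a harmless elaboration of the same argument, not a different approach.
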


This result is very useful: if the assumptions of the MAR model are
satisfied it allows us to simply delete the incomplete cases and
implement the test for the full model; i.e.\ we may use the same
quantiles.

\begin{remark} \label{remTcWhat}
Following the discussions above and those preceding Theorem 
\ref{thmTnWhat} in the previous section we can construct estimated 
weights based on complete cases as follows. The first and second 
conditional moments of $Y$ given $X$ can be estimated by complete case 
versions $\hat r_{c}$ and $\hat r_{2,c}$ of the local polynomial smoothers
$\hat r$ and $\hat r_{2}$. 
Hence, $\hat \sigma_c(\cdot) =
\{\hat r_{2,c}(\cdot) - \hat r_{c}^2(\cdot)\}^{1/2}$ 
is a consistent
complete case estimator of $\omega(\cdot) = \sigma(\cdot)$ (which 
optimises the power of the test). The complete case version of the 
test statistic 
$\tilde T_n$ is
\begin{equation*}
\tilde T_{n,c} = \sup_{t \in \R}\Big|
 N^{-1/2} \sj \delta_j\tilde W_{j,c}\1\big[\hve_{j,c} \leq t\big] \Big|,
\end{equation*}
where the weights $\tilde W_{j,c}$ are complete case versions of $\tilde W_j$;
see \Ref{tWj}.
The transfer principle then implies that the results of Theorem
\ref{thmTnWhat} continue to hold for $\tilde T_{n,c}$, i.e.\
$\tilde T_{n,c}$ tends under the null hypothesis in distribution to 
$\sup_{t \in [0,1]} |B_0(t)|$ and is asymptotically most powerful
for detecting smooth local alternatives.
\end{remark}


\section{Simulation results} \label{simstudy}
A brief simulation study demonstrates the effectiveness of a
hypothesis test using the test statistics given above for the full
model and the MAR model.

The test statistics for the full model and the MAR model are based on
the nonparametric estimator $\rhat$ (see \Ref{hatbeta}), which
involves a bandwidth $c_n = c \, \{n \log(n)\}^{-1/(2s)}$, with
proportionality constant $c$ that has to be suitably chosen. We
recommend selecting $c_n$ (and thus $c$) by cross-validation, i.e.\
$c_n$ is the bandwidth that minimises the leave-one-out
cross-validated estimate of the mean squared prediction error (see
e.g.\ H\"ardle and Marron, 1985). This procedure is easy to implement,
is fully data-driven and performed well in our study.
It can also be used in the scale
function estimator $\hat \sigma$ for a test based on $\tilde T_n$,
which is what we did in the examples below.

The test level is $\alpha = 5\%$ in the following and the asymptotic
quantile (introduced after Theorem \ref{thmTn}) is therefore $b_{0.05}
\approx 1.1779$. For small to moderate sample sizes we recommend the
smooth residual bootstrap approach by Neumeyer (2009) for estimating
the quantiles, which worked well in our simulation study. In
particular, when the sample sizes were small ($50$ or less) the
results using the asymptotic quantile $b_{0.05}$ were not satisfactory
in general and the results using the bootstrap quantile were more
plausible. At moderate and larger sample sizes the bootstrap
quantiles and the asymptotic quantile $b_{0.05}$ produced similar
results.

The bootstrap method is suitable for our setting as it makes it possible to
produce a smooth bootstrap distribution that satisfies our model
assumptions. Note that the smooth bootstrap approach is based on
residuals $\hve = Y - \rhat(X)$ and the estimated weights $\tW_j$ from
\eqref{tWj}, i.e.\ it also involves the bandwidths selected by
cross-validation that were introduced above.

\medskip\noindent
{\bf Example 1: testing for heteroskedasticity with one covariate.}
For the simulations we chose the regression function as
\begin{equation*}
r(x) = 2x + 3\cos(\pi x)
\end{equation*}
to preserve the nonparametric nature of the model. The covariates were
generated from a uniform distribution and errors from a standard
normal distribution: $X_j \sim U(-1,1)$ and $\err_j \sim N(0,1)$ for
$j = 1,\ldots,n$. Finally, the indicators $\delta_j$ have a
Bernoulli$(\pi(x))$ distribution, with $\pi(x) = P(\delta = 1| X =
x)$. In this study we use a logistic distribution function for
$\pi(x)$ with a mean of 0 and a scale parameter of 1. As a
consequence the average amount of missing data is around 50\%,
ranging between 27\% and 73\%. We work with $d=1$, the locally linear
smoother and sample sizes $50$, $100$, $200$ and $300$.


\begin{figure}
\centering
\begin{minipage}[c]{0.95\textwidth}
\centering
\includegraphics[width=0.475\textwidth]{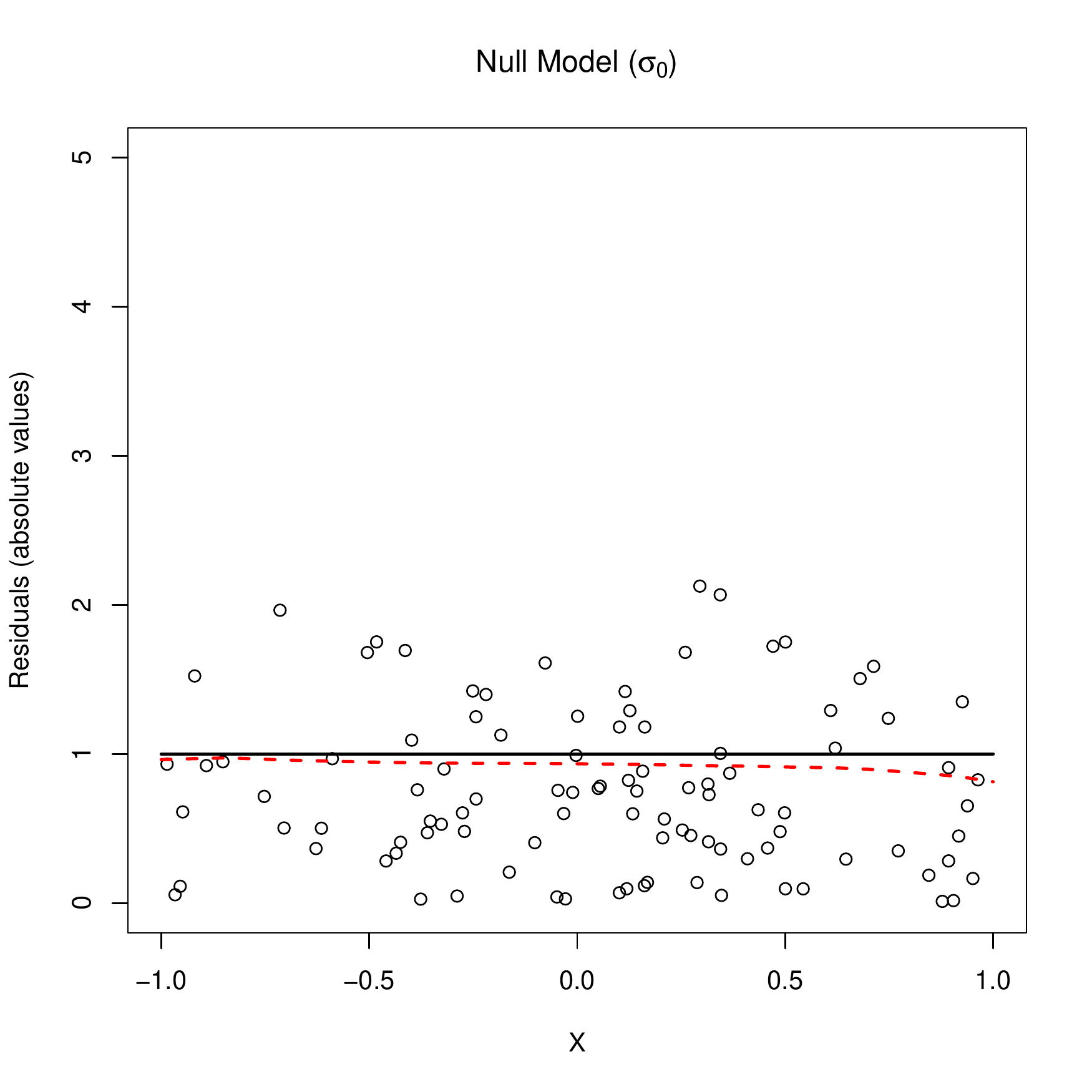}
\includegraphics[width=0.475\textwidth]{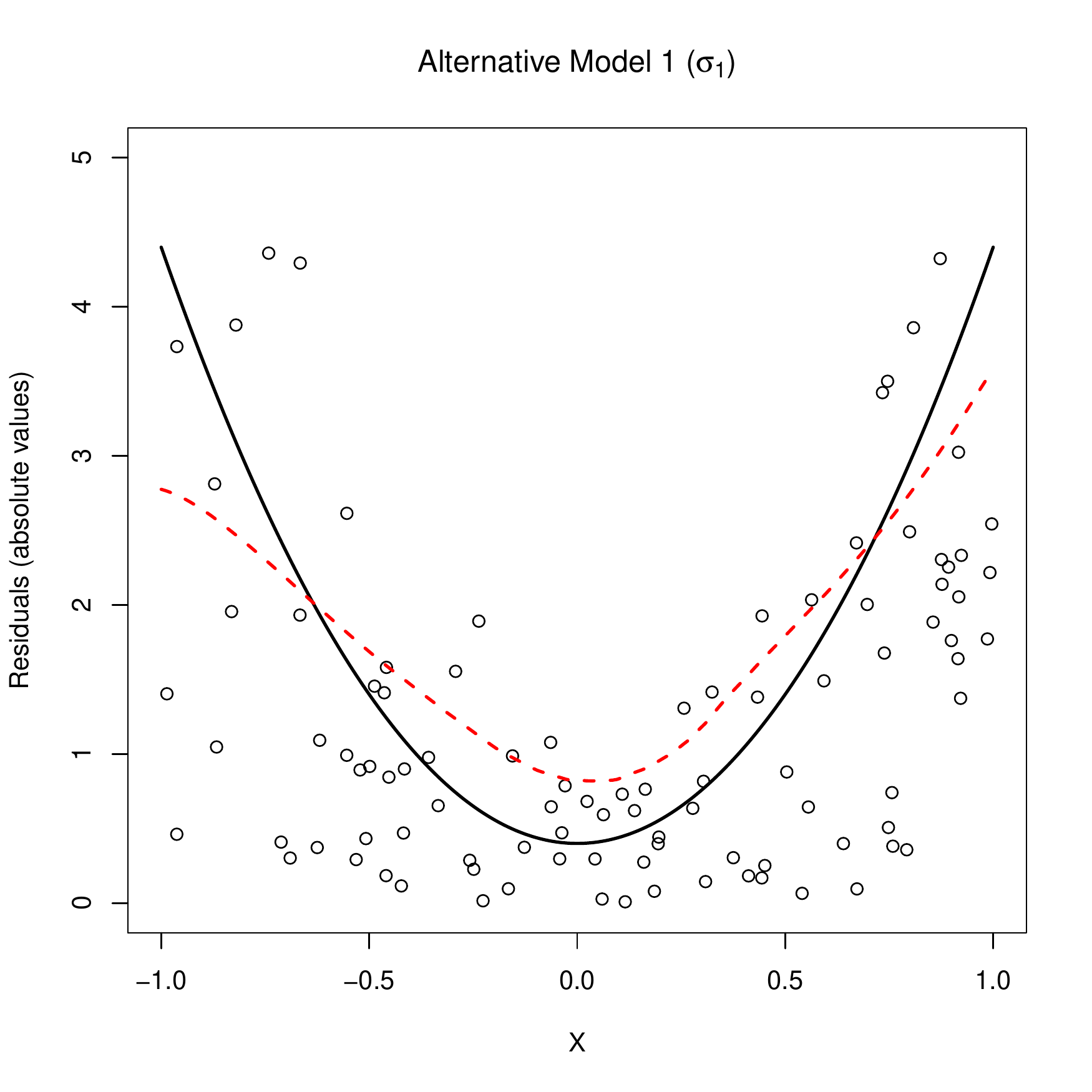}
\end{minipage}
\begin{minipage}[c]{0.95\textwidth}
\centering
\includegraphics[width=0.475\textwidth]{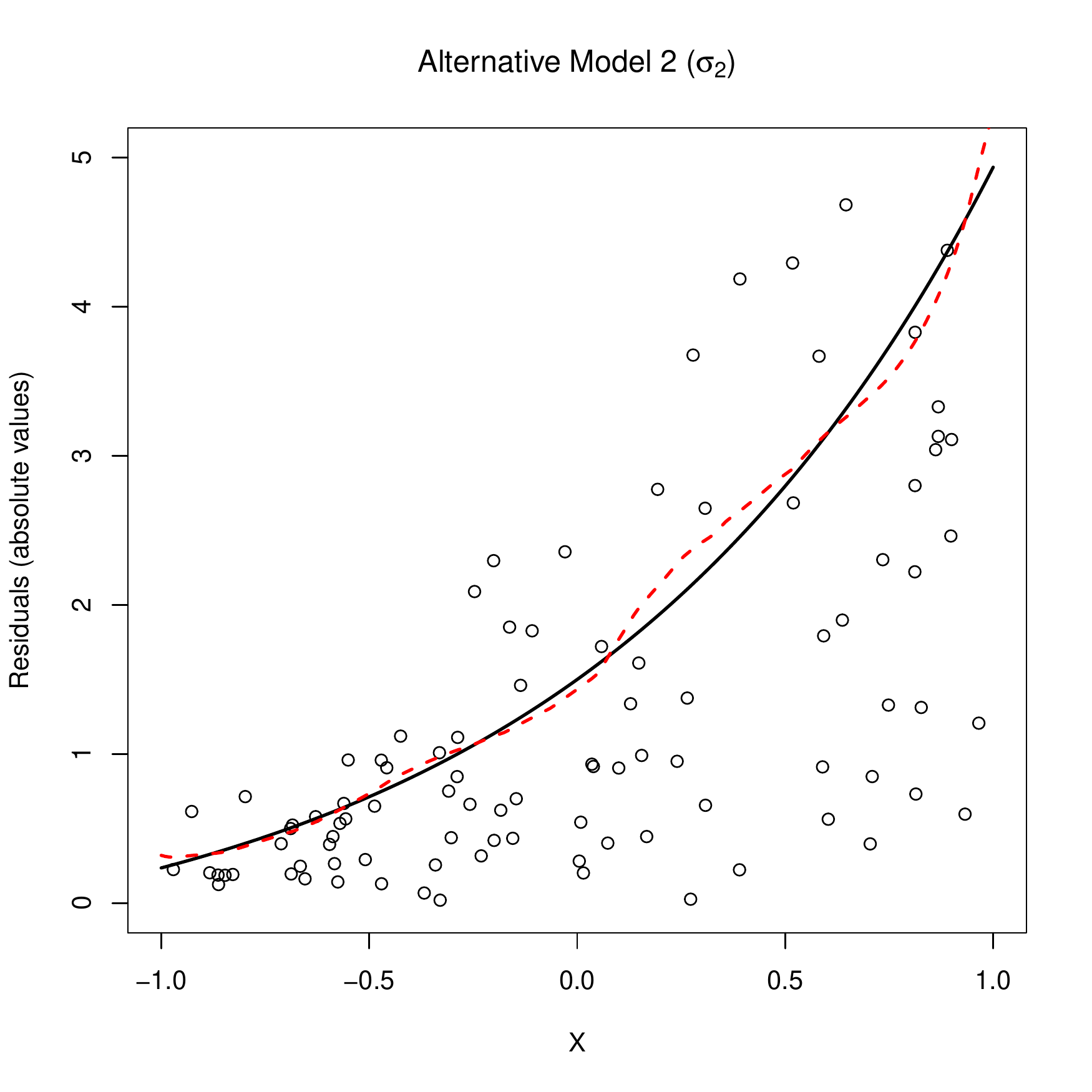}
\includegraphics[width=0.475\textwidth]{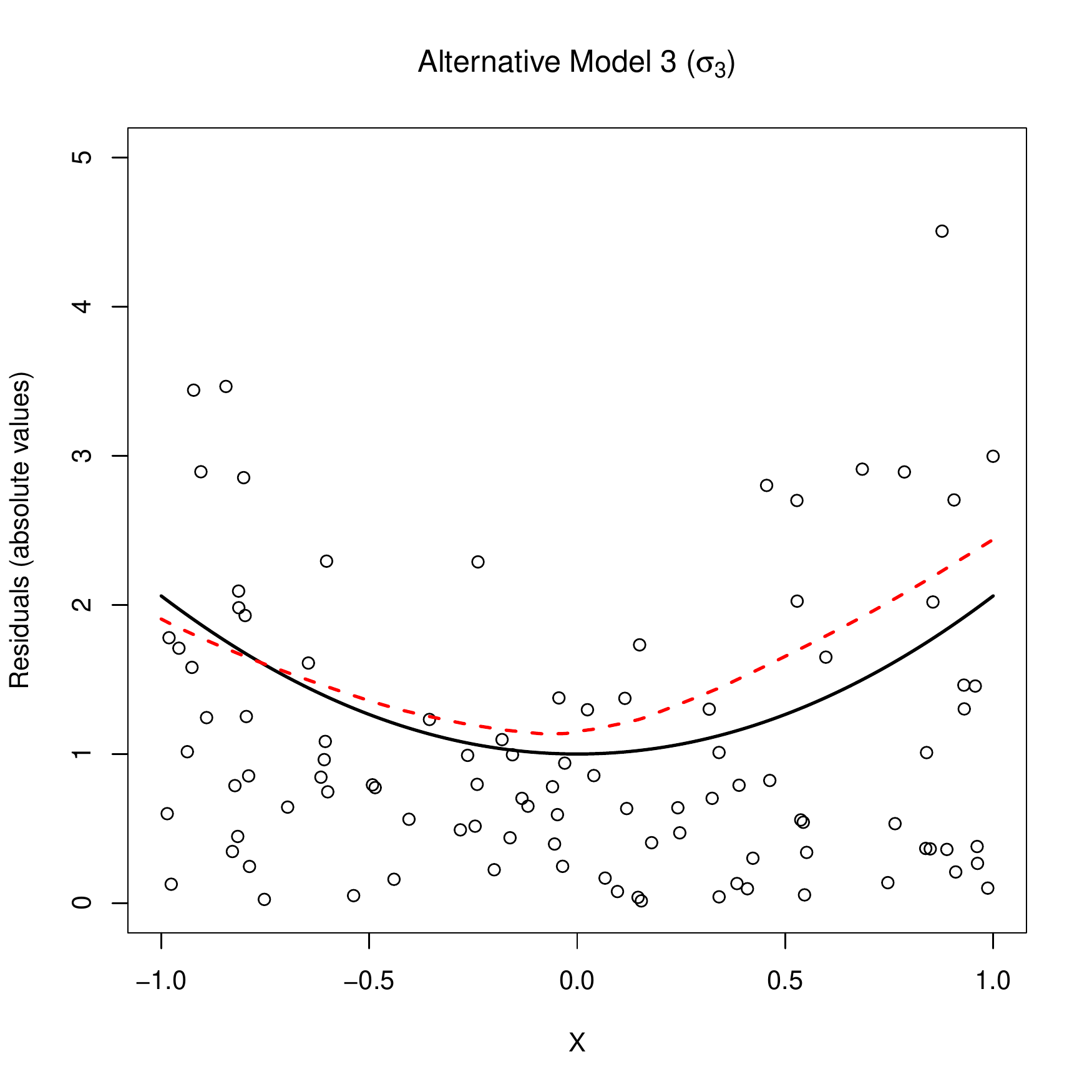}
\end{minipage}
\caption{Scatter plots of absolute valued residuals. Each plot
also shows the underlying scale function in black (solid line) and a kernel
smoothed estimate of the scale function in red (dashed line).}
\label{figure1}
\end{figure}


In order to investigate the level and power of the test in the full
model and in the MAR model we consider the following scale functions:
\begin{gather*}
\sigma_0(x) = 1,  \qquad
\sigma_1(x) = 0.4 + 4x^2, \\
\sigma_2(x) = 2e^{x} - 0.5, \qquad
\sigma_3(x) = 1 + 15n^{-1/2}x^2.
\end{gather*}
The constant scale function $\sigma_0$ allows for the (5\%) level of
the test to be checked.
As an illustration, we generated a random dataset of size 100 for each
scenario. A scatter plot of the residuals (in absolute value) from the
nonparametric regression is given for each dataset (Figure
\ref{figure1}).

The simulations based on (non-constant) scale
functions $\sigma_1$, $\sigma_2$ and $\sigma_3$ give an indication of
the power of the test in different scenarios. In particular, we
consider the power of the test against the local alternative
$\sigma_3$.
The power is maximised if $\omega$ equals the scale
function $\sigma$ (or is a linear transformation of $\sigma$); see
Remark \ref{remTnpower} in Section \ref{fullmodel} and the discussion
following it.
The formula for the weights based on an estimator
$\hat \sigma(x) = \{\hat r_2(x) - \hat r^2(x)\}^{1/2}$ of $\sigma$ is given
in \Ref{tWj}.

To check the performance of our test we conducted simulations
of 1000 runs. Table \ref{tableTn} shows the test results for fully
observed data ($\tilde T_n$). Similar results are given for missing data
($\tilde T_{n,c}$) in Table \ref{tableTnc}.

The figures corresponding to the null hypothesis ($\sigma_0$) show that
test levels for fully observed data ($\tilde T_n$) are near the
desired $5\%$ at larger sample sizes ($3.9\%$ at sample size $300$)
but more conservative at smaller sample sizes ($1.6\%$ at sample size
$50$).
The results for missing data ($\tilde T_{n,c}$) are further away from
$5\%$ when the sample size is small and the asymptotic quantile
$b_{0.05}$ is used. Both testing procedures appear to have levels near
the desired $5\%$ when the smooth bootstrap quantiles are used, which
is expected.


\begin{table}
\centering
\small
\begin{tabular}{|r|c c c c|}
\hline
\diagbox{$\sigma$}{$n$} & $50$ & $100$ & $200$ & $300$ \\
\hline
$\sigma_0$ & $0.016$ ($0.058$) & $0.019$ ($0.056$) &
 $0.033$ ($0.039$) & $0.039$ \\
$\sigma_1$ & $0.426$ ($0.477$) & $0.939$ ($0.945$) &
 $1.000$ ($1.000$) & $1.000$ \\
$\sigma_2$ & $0.487$ ($0.631$) & $0.971$ ($0.957$) &
 $1.000$ ($0.996$) & $1.000$ \\
$\sigma_3$ & $0.127$ ($0.176$) & $0.299$ ($0.387$) &
 $0.500$ ($0.576$) & $0.668$ \\
\hline
\end{tabular}
\vspace{1ex}
\caption{Example 1: Simulated level ($\sigma_0$ figures) and power for
  fully observed data ($\tilde T_n$). Figures in parentheses are
  obtained from bootstrap quantiles.}
\label{tableTn}
\end{table}



\begin{table}
\centering
\small
\begin{tabular}{|r|c c c c|}
\hline
\diagbox{$\sigma$}{$n$} & $50$ & $100$ & $200$ & $300$ \\
\hline
$\sigma_0$ & $0.009$ ($0.054$) & $0.015$ ($0.051$) &
 $0.029$ ($0.048$) & $0.037$ \\
$\sigma_1$ & $0.097$ ($0.173$) & $0.482$ ($0.573$) &
 $0.957$ ($0.953$) & $0.998$ \\
$\sigma_2$ & $0.112$ ($0.223$) & $0.443$ ($0.550$) &
 $0.945$ ($0.913$) & $1.000$ \\
$\sigma_3$ & $0.032$ ($0.080$) & $0.097$ ($0.148$) &
 $0.197$ ($0.283$) & $0.304$ \\
\hline
\end{tabular}
\vspace{1ex}
\caption{Example 1: Simulated level ($\sigma_0$ figures) and power for
  missing data ($\tilde T_{n,c}$). Figures in parentheses are
  obtained from bootstrap quantiles.}
\label{tableTnc}
\end{table}


We now consider the power of each test
beginning with $\sigma_1$. The procedure for fully observed data
($\tilde T_n$) performs very well at moderate and larger sample
sizes. For example, we rejected the null hypothesis $93.9\%$ of the
time at the moderate sample size of $100$ using the asymptotic
quantile. In this case, results using the smooth bootstrap quantile
were almost identical (rejecting $94.5\%$ of the time).
Similar results were obtained for missing data ($\tilde T_{n,c}$), but
they are (as expected) less impressive.
Note that the smooth bootstrap quantiles do not (in general) perform
dramatically better than the asymptotic quantile $b_{0.05}$.

The figures corresponding to $\sigma_2$ and $\sigma_3$ show that both
tests have difficulty rejecting when samples are small. The
procedure for fully observed data ($\tilde T_n$) only rejected the
null hypothesis $48.7\%$ ($\sigma_2$) and $12.7\%$ ($\sigma_3$) of the
time for samples of size $50$ and less often when data were
missing. Here the smooth bootstrap quantiles show improved
performance over the asymptotic quantile $b_{0.05}$ and reject the
null hypothesis $63.1\%$ ($\sigma_2$) and $17.6\%$ ($\sigma_3$) of the
time. The results are similar for missing data but with reduced
performance (as expected).

In conclusion, each test performs well and the procedures $\tilde T_n$
and $\tilde T_{n,c}$ proposed in this article appear promising for
detecting heteroskedasticity. When sample sizes are small the smooth
bootstrap quantiles appear to be helpful.

\medskip\noindent
{\bf Example 2: testing for heteroskedasticity with two covariates.}
Here we work with the regression function
\begin{equation*}
r(x_1,x_2) = 2x_1 - x_2 + 3e^{x_1 + x_2}.
\end{equation*}
The covariates $X_1$ and $X_2$ are generated from a
joint normal distribution, with component variances $1$ and correlation
coefficient $1/2$, restricted to the interval $[-1, 1]^2$ by
rejection sampling. As above we generate the model errors from a
standard normal distribution. In this example we do not consider
missing data because we expect the conclusions to mirror those of the
first simulation study. We are interested in the performance of our
testing procedure when we select different weights. We work with
$d=3$, the locally cubic smoother, and sample sizes $50$, $100$, $200$
and $300$. The level of the test is $5\%$, as in Example 1.
In Example 1 the bootstrap and the test based on the asymptotic quantile
$b_{0.05}$ produced similar results for sample sizes 100 and larger.
We therefore only consider the latter method in this example.


\begin{table}
\centering
\small
\begin{tabular}{|r|c c c c|}
\hline
\diagbox{$\sigma$}{$n$} & $50$ & $100$ & $200$ & $300$ \\
\hline
$\sigma_0$ & $0.017$ & $0.015$ & $0.016$ & $0.019$ \\
$\sigma_1$ & $0.153$ & $0.749$ & $0.996$ & $1.000$ \\
$\sigma_2$ & $0.027$ & $0.023$ & $0.025$ & $0.026$ \\
\hline
\end{tabular}
\vspace{1ex}
\caption{Example 2: Simulated level ($\sigma_0$ figures) and power for
  $\tilde T_n$ using detection function $\omega_1 = \sigma_1$.}
\label{table2w1}
\end{table}
\begin{table}
\centering
\small
\begin{tabular}{|r|c c c c|}
\hline
\diagbox{$\sigma$}{$n$} & $50$ & $100$ & $200$ & $300$ \\
\hline
$\sigma_0$ & $0.003$ & $0.010$ & $0.018$ & $0.019$ \\
$\sigma_1$ & $0.085$ & $0.415$ & $0.904$ & $0.989$ \\
$\sigma_2$ & $0.015$ & $0.037$ & $0.118$ & $0.208$ \\
\hline
\end{tabular}
\vspace{1ex}
\caption{Example 2: Simulated level ($\sigma_0$ figures) and power for
  $\tilde T_n$ using detection function $\omega_2 = 1 +
  \cos((\pi/2)(x_1 + x_2))$. }
\label{table2w2}
\end{table}
\begin{table}
\centering
\small
\begin{tabular}{|r|c c c c|}
\hline
\diagbox{$\sigma$}{$n$} & $50$ & $100$ & $200$ & $300$ \\
\hline
$\sigma_0$ & $0.063$ & $0.049$ & $0.060$ & $0.069$ \\
$\sigma_1$ & $0.186$ & $0.617$ & $0.971$ & $1.000$ \\
$\sigma_2$ & $0.477$ & $0.934$ & $1.000$ & $1.000$ \\
\hline
\end{tabular}
\vspace{1ex}
\caption{Example 2: Simulated level ($\sigma_0$ figures) and power for
  $\tilde T_n$ using detection function $\omega_3$ as an estimated
  scale function.}
\label{table2w3}
\end{table}

For the simulations we use three scale functions:
\begin{gather*}
\sigma_0 \equiv 1,
\quad
\sigma_1(x_1,x_2) = 0.5 + 5x_1^2 + 5x_2^2,
\quad
\sigma_2(x_1,x_2) = 4 + 3.5 \sin\big((\pi/2)(x_1 + x_2)\big)
\end{gather*}
Our weights are constructed based on detection functions:
\begin{gather*}
\omega_1 = \sigma_1,
\quad
\omega_2(x_1,x_2) = 1 + \cos\big((\pi/2)(x_1 + x_2)\big).
\end{gather*}
and $\omega_3$ is an estimated scale function as in Example 1.
We expect that the choice $\omega_1=\sigma_1$ will provide the largest
power for detecting $\sigma_1$. We also consider the choice $\omega_2$
to illustrate the test performance when we choose some reasonable
non-constant detection function. The detection function $\omega_3$ is
based on a locally cubic estimator for the scale function.

We conducted simulations consisting of 1000 runs. The results 
for $\omega_1$, $\omega_2$ and $\omega_3$ are given in Tables \ref{table2w1}, 
\ref{table2w2} and \ref{table2w3}, respectively. 
The figures corresponding to the test level $\alpha=5\%$ ($\sigma_0 \equiv 1$) 
and the fixed detection functions $\omega_1$ and $\omega_2$ show the tests
using the asymptotic quantile are conservative, which mirrors the
results from Example 1. At sample size $300$, the rejection rates for
the tests using $\omega_1$ and $\omega_2$ are both near $2\%$. The
test using $\omega_3$, an estimated scale function, generally produces
higher rejection rates (but still near the nominal level). At
the sample size $300$ the rejection rate for the test using $\omega_3$
is about $7\%$.

When we consider the remaining figures corresponding to the powers of
each test, we find considerable differences. The test using $\omega_1
= \sigma_1$ (Table \ref{table2w1}) provides, as expected, the best
results when $\sigma_1$ is in fact the underlying scale function. 
The corresponding figures for the test that uses the estimated scale function 
$\omega_3$ (Table \ref{table2w3}) are similar. The results in
Table \ref{table2w2} indicate that the test using $\omega_2 =
1 + \cos((\pi/2)(x_1 + x_2))$ is less effective for detecting
$\sigma_1$, but still quite good for the larger sample sizes 200 and 300.
Comparing the three tests in Tables \ref{table2w1}-\ref{table2w3}
for detecting $\sigma_2$, we see that only the test with the estimated scale 
function $\omega_3$ appears to be powerful.

Only at very large sample sizes can we expect that all three testing
procedures will provide similar results. In conclusion, we find the
test using an arbitrary non-constant weight function is useful but
will normally be outperformed by a test using estimated weights.

\begin{remark}[curse of dimensionality] \label{remCurseDim}
The simulation results in Tables \ref{table2w1}-\ref{table2w3}
suggest that our proposed tests, which are based on local polynomial
smoothers, may not be very reliable when the dimension of the
covariate vector is large. In this case the smoother (as well as other
nonparametric function estimators) will be affected by the ``curse of
dimensionality'', which is implicated by the entropy results in
Section \ref{aux}. To meet the situation with many covariates in
practice, we recommend working with dimension-reducing
transformations: choose, for example, a transformation $\xi$ of the
covariate vector $X$ such that $V = \xi(X)$ is just one covariate (and
the function estimator is not affected by the dimensionality
problem). A popular example would be the single-index model, where
$\xi$ is a linear combination of the components of $X$. Working with
such transformations will not change the independence-dependence
structure between the detection function and the errors, which is key
for our procedure to work.
\end{remark}



\section{Technical details} \label{aux}
In this section we present the proof of Theorem \ref{thmTn}, 
the limiting distribution of $T_n$ under the null hypothesis,
and some auxiliary results. 
As explained in Section \ref{marmodel}, we do not have to prove 
Theorem \ref{marmodel} for the MAR model:
it suffices to consider the full model and the test statistic $T_n$. 
Our approach consists of two steps. Our first step will be to use 
Theorem 2.2.4 in Koul's 2002 book on weighted empirical processes
to obtain the limiting distribution of an asymptotically linear
statistic (a sum of i.i.d.\ random variables) that is related to
$T_n$. Then we review some results from M\"uller, Schick and
Wefelmeyer (2009), who propose local polynomial smoothers to estimate
a regression function of many covariates. Using these results, we will
show that the statistic $T_n$ and the asymptotically linear statistic
are indistinguishable for large samples, i.e.\ they have the same 
limiting distribution.

The asymptotically linear statistic, which is an empirical process
related to $T_n$, is defined similarly to $T_n$ as
\begin{equation} \label{aslinstat}
\sup_{t \in \R} \Big| n^{-1/2}\sj W_j
 \Big\{ \1[\ve_j \leq t] - F(t) \Big\}
 \Big|
=
\sup_{t \in \R} \Big| n^{-1/2}\sj W_j
 \Big\{ \1[\sigma_0\err_j \leq t] - F(t) \Big\}
 \Big|,
\end{equation}
where $\ve_j = \sigma_0 \err_j$ is the unobserved ``model error'' from 
the null hypothesis and $W_1,\ldots,W_n$ are the 
standardised weights given in \Ref{Wj}. We
will now demonstrate that (under $H_0$) the requirements for Koul's theorem 
are satisfied. The asymptotic statement is given afterwards 
in Corollary \nolinebreak \ref{coraslinstat}.

Theorem 2.2.4 of Koul (2002) states that 
\begin{equation*}
\zeta_n(t) =  n^{-1/2}\sj D_j \Big\{
 \1\big[C_j \leq t\big] - K(t) \Big\}
 \xrightarrow[]{D} \xi \Big\{ B_0 \circ K(t) \Big\},
 \qquad t \in \R,\text{ as } n \to \infty,
\end{equation*}
where $B_0$ is the standard Brownian bridge in the Skorohod space
$D[0,1]$, independent of a random variable $\xi$. The roles of his
random variable $C_j$ and the square integrable random variable $D_j$,
which are assumed to be independent, are now played by $\sigma_0 \err_j$
and $W_j$, $j=1,\ldots,n$. The distribution function $K$ corresponds
to our error distribution function $F$ and is assumed to have a
uniformly continuous Lebesgue density. The random variable $\xi$ from
above comes from Koul's requirement that
\begin{equation*}
\Big|\frac{1}{n}\sj D_j^2 \Big|^{1/2} = \xi + \op
 \qquad \text{for some positive r.v. $\xi$}.
\end{equation*}
Here we work with $W_j$, in place of $D_j$, with $E(W_j^2) =
1$. Therefore, by the law of large numbers, 
$n^{-1}\sj W_j^2 = 1 + \op$ and, using the continuous mapping 
theorem, $|n^{-1}\sj W_j^2|^{1/2} = 1 + \op$, 
i.e.\ $\xi \equiv 1$. Hence we have
\begin{equation*}
n^{-1/2}\sj W_j \Big\{
 \1[\sigma_0\err_j \leq t] - F(t) \Big\}
 \xrightarrow[]{D} B_0 \circ F(t),
 \qquad t \in \R, \text{ as } n \to \infty.
\end{equation*}
Taking the supremum with respect to $t \in \R$, the right-hand side
becomes $\sup_{t \in \R} |B_0 \circ F(t)| = \sup_{t \in [0,1]}
|B_0(t)|$, which specifies the asymptotic distribution of the
asymptotically linear statistic \Ref{aslinstat}. Note that Koul's
theorem also provides the limiting distribution for a shifted version
$\hat \zeta_n$ of $\zeta_n$ that involves random variables $Z_1,
\ldots, Z_n$. Since we only need the simpler result for $\zeta_n$, we
do not need to verify the more complicated assumptions regarding the
$Z_j$'s. This shows the conditions of Theorem 2.2.4 in Koul (2002) are
indeed satisfied. We will formulate this result as a corollary. Since
we only require the weights to be square-integrable functions of $X_j$
with $E(W_j^2) = 1$, we will not require the explicit form \Ref{Wj}.

\begin{corollary} \label{coraslinstat}
Consider the homoskedastic nonparametric regression model 
$Y = r(X) + \sigma_0\err$. 
Assume the distribution function $F$ of the errors has 
a uniformly continuous Lebesgue density $f$ that is positive 
almost everywhere. Further, let $W_j$ be a square integrable 
function of $X_j$ satisfying $E(W_j^2) = 1$, $j=1,\ldots,n$. 
Then
\begin{gather} \label{coraslinstatresult}
\sup_{t \in \R} \Big| n^{-1/2}\sj W_j \Big\{
 \1[\sigma_0\err_j \leq t] - F(t) \Big\}
 \Big| \xrightarrow[]{D} \sup_{t \in [0,1]} |B_0(t)|,
 \qquad\text{as } n \to \infty,
\end{gather}
where $B_0$ denotes the standard Brownian bridge.
\end{corollary}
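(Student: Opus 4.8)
The plan is to obtain the corollary as a direct application of Koul's Theorem 2.2.4, whose hypotheses I would verify one at a time; the weak convergence of the underlying weighted empirical process is the content of that theorem, so the only work here is to check that our quantities fit its framework and then to pass to the supremum functional. I would frame the argument so as to make precise the informal verification carried out in the paragraphs preceding the statement.

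First I would match the objects. In Koul's notation the process is $\zeta_n(t) = n^{-1/2}\sum_{j=1}^n D_j\{\1[C_j \le t] - K(t)\}$, and I set $C_j = \sigma_0\err_j$, $D_j = W_j$, and $K = F$. Two structural hypotheses must then be checked. The independence of $C_j$ and $D_j$ follows from the model: $W_j$ is by assumption a measurable, square-integrable function of $X_j$, and in the homoskedastic model $X$ and $\err$ are independent, so $W_j$ and $\sigma_0\err_j$ are independent. The distributional hypothesis on $K$ is exactly our assumption that $F$ has a uniformly continuous Lebesgue density $f$.

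Second, I would verify Koul's scaling condition $|n^{-1}\sum_{j=1}^n D_j^2|^{1/2} = \xi + \op$ for a positive random variable $\xi$. Because the $W_j = g(X_j)$ are i.i.d.\ with $E(W_j^2) = 1$, the weak law of large numbers gives $n^{-1}\sum_{j=1}^n W_j^2 = 1 + \op$, and the continuous mapping theorem applied to the square root yields $\xi \equiv 1$. Koul's theorem then delivers $\zeta_n \xrightarrow{D} B_0 \circ F$, with $B_0$ a standard Brownian bridge in $D[0,1]$.

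Finally I would pass to the test statistic by a further application of the continuous mapping theorem: the functional $z \mapsto \sup_t|z(t)|$ is continuous in the uniform norm and hence continuous at paths that are themselves continuous, and the limit $B_0 \circ F$ has continuous sample paths, so $\sup_{t \in \R}|\zeta_n(t)| \xrightarrow{D} \sup_{t \in \R}|B_0 \circ F(t)|$. The remaining step is to simplify the right-hand side to $\sup_{u \in [0,1]}|B_0(u)|$: since $f$ is a density that is positive almost everywhere, $F$ is continuous with range $(0,1)$, so as $t$ ranges over $\R$ the argument $F(t)$ ranges over $(0,1)$, and because $B_0$ is continuous with $B_0(0)=B_0(1)=0$ the supremum over $(0,1)$ coincides with that over the closed interval $[0,1]$. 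I do not expect a genuine obstacle, since Koul's theorem supplies both the finite-dimensional convergence and the tightness; the only points requiring care are the i.i.d.\ law of large numbers that forces $\xi \equiv 1$ and the justification that the supremum over $\R$ reduces to the supremum over $[0,1]$.
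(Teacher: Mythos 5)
Your proposal is correct and follows essentially the same route as the paper: both invoke Koul's Theorem 2.2.4 with $C_j = \sigma_0\err_j$, $D_j = W_j$, $K = F$, verify the scaling condition via the law of large numbers and the continuous mapping theorem to force $\xi \equiv 1$, and then pass to the supremum. Your added care on the final reduction $\sup_{t\in\R}|B_0\circ F(t)| = \sup_{u\in[0,1]}|B_0(u)|$, using positivity of $f$ and $B_0(0)=B_0(1)=0$, is a welcome but minor elaboration of what the paper leaves implicit.
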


For our second step, we will show that $T_n$ and the asymptotically
linear statistic \Ref{aslinstat} are asymptotically equivalent. To
begin we rewrite $T_n$, using the identity (under $H_0$) $\hve = Y -
\rhat(X) = \sigma_0\err - \rhat(X) + r(X)$, as
\begin{equation*}
\sup_{t \in \R}\Big| n^{-1/2}\sj \hW_j\1\big[\hve_j \leq t\big]\Big|
 = \sup_{t \in \R}\Big| n^{-1/2}\sj 
 \hW_j\1\big[\sigma_0\err_j \leq t + \rhat(X_j) - r(X_j)\big] \Big|.
\end{equation*}
We will first consider the shift in the indicator function from 
$t$ to $t + \rhat - r$, which comes in because $T_n$ involves
an estimator $\hat r$ of the regression function. 

Consider now the H\"older space $H(d,\gamma)$ from Section
\ref{fullmodel}, i.e.\ the space of functions that have partial
derivatives of order $d$ that are H\"older with exponent $\gamma \in
(0,1]$. For these functions we define the norm
\begin{equation*}
\|h\|_{d,\gamma} = \max_{i \in I(d)} \sup_{x \in [0,1]^m}
 \big|D^ih(x)\big| + \max_{i \in I(d)}
 \sup_{x,\,y \in [0,1]^m,~ x \neq y}
 \frac{|D^ih(y) - D^ih(x)|}{\|x - y\|^\gamma},
\end{equation*}
where $\|v\|$ is the Euclidean norm of a real-valued vector $v$ and
\begin{equation*}
D^ih(x) = \frac{\partial^{i_1 + \cdots + i_m}}{
 \partial x_1^{i_1} \cdots \partial x_m^{i_m}} h(x),
 \qquad x = (x_1, \ldots, x_m) \in [0,1]^m.
\end{equation*}
Write $H_1(d,\gamma)$ for the unit ball of $H(d,\gamma)$ using this
norm.
These function spaces are particularly useful for studying local
polynomial smoothers $\rhat$ as defined in Section
\ref{fullmodel}. M\"uller et al.\ 
(2009) make use of
these spaces to derive many useful facts concerning regression
function estimation using local polynomials. We will use some of their
results to prove Theorem \ref{thmTn}; see Lemma \ref{lemrhat} below.
%
%
\begin{lemma}[\sc Lemma 1 of M\"uller, Schick and Wefelmeyer, 2009] 
\label{lemrhat}
Let the local polynomial smoother $\rhat$, the regression function
$r$, the covariate distribution $G$ and the error distribution $F$
satisfy the assumptions of Theorem \ref{thmTn}. Then there is a random
function $\ahat$ such that, for some $\alpha > 0$, 
\begin{gather} 
\label{ahatinH1}
 P(\hat a \in H_1(m,\alpha)) \to 1, \\
\label{rhatrahatneg}
 \sup_{x \in [0,1]^m} \big|\rhat(x) - r(x) - \ahat(x)\big| = \opn.
\end{gather} 
\end{lemma}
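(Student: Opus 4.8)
The plan is to work directly with the weighted-least-squares solution behind $\rhat$ and to isolate its error-linear part as the function $\ahat$. For fixed $x$ write $\rhat(x) = e_0^\top M_n(x)^{-1} Z_n(x)$, where $e_0$ selects the component for the multi-index $0$, $\psi = (\psi_i)_{i\in I(d)}$, and
$$M_n(x) = \frac{1}{nc_n^m}\sj \psi\Big(\frac{X_j - x}{c_n}\Big)\psi\Big(\frac{X_j - x}{c_n}\Big)^\top w\Big(\frac{X_j - x}{c_n}\Big), \quad Z_n(x) = \frac{1}{nc_n^m}\sj \psi\Big(\frac{X_j - x}{c_n}\Big) w\Big(\frac{X_j - x}{c_n}\Big) Y_j.$$
First I would prove uniform invertibility: a uniform law of large numbers over $x \in [0,1]^m$, built on Assumption \ref{AssumpG} and the compact support of $w$, shows that with probability tending to one $M_n(x)$ is positive definite uniformly in $x$, with smallest eigenvalue bounded away from zero (its leading behaviour being $g(x)\Lambda$ with $\Lambda = \int \psi\psi^\top w$ and $g$ the density of $X$), so that $M_n(x)^{-1}$ exists uniformly. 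Because $\psi$ and $w$ are $(m+2)$-times continuously differentiable, $M_n(\cdot)$ and $Z_n(\cdot)$ are smooth in $x$, and I would record analogous uniform bounds on their partial derivatives up to order $m$, each $x$-derivative introducing a factor $c_n^{-1}$.

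Inserting $Y_j = r(X_j) + \sigma_0\err_j$ splits $Z_n = Z_n^{(r)} + Z_n^{(e)}$ into a regression part and an error part, and I would simply \emph{define}
$$\ahat(x) = e_0^\top M_n(x)^{-1} Z_n^{(e)}(x), \qquad Z_n^{(e)}(x) = \frac{1}{nc_n^m}\sj \psi\Big(\frac{X_j - x}{c_n}\Big) w\Big(\frac{X_j - x}{c_n}\Big)\sigma_0\err_j.$$
With this choice the remainder in \Ref{rhatrahatneg} equals exactly the regression (bias) term $e_0^\top M_n(x)^{-1} Z_n^{(r)}(x) - r(x)$. Since the local polynomial reproduces polynomials of degree $\le d$ exactly, a Taylor expansion of $r$ about $x$ leaves only the H\"older remainder, which on the kernel support is $O(\|X_j - x\|^{s}) = O(c_n^{s})$; combined with the uniform bound on $\|M_n(x)^{-1}\|$ this gives $\sup_x|e_0^\top M_n(x)^{-1} Z_n^{(r)}(x) - r(x)| = O_p(c_n^s)$. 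As $c_n \sim \{n\log n\}^{-1/(2s)}$ yields $c_n^s \sim \{n\log n\}^{-1/2} = \opn$, the remainder is $\opn$ uniformly, which is \Ref{rhatrahatneg}.

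The genuinely delicate step is \Ref{ahatinH1}. On the event that $M_n$ is invertible (probability $\to 1$) the function $\ahat$ is $m$-times continuously differentiable in $x$, because $M_n(\cdot)^{-1}$, $\psi$ and $w$ are smooth; thus $\ahat$ is a bona fide candidate for $H(m,\alpha)$, and the task is to bound its norm. A pointwise variance computation gives $\ahat(x) = O_p\big((nc_n^m)^{-1/2}\big)$, and since each $x$-derivative contributes $c_n^{-1}$, the partial derivative of order $|i|$ satisfies $D^i\ahat(x) = O_p\big(c_n^{-|i|}(nc_n^m)^{-1/2}\big)$; for $|i| = m$ this is $O_p\big((nc_n^{3m})^{-1/2}\big)$, which tends to zero precisely when $nc_n^{3m}\to\infty$, i.e.\ when $s > 3m/2$ --- exactly the smoothness hypothesis of the theorem. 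An analogous estimate bounds the H\"older-$\alpha$ modulus of the order-$m$ derivatives by $O_p\big((nc_n^{3m+2\alpha})^{-1/2}\big)$, which is negligible as soon as $s > 3m/2 + \alpha$; since $s > 3m/2$ is strict one may fix such an $\alpha > 0$, and this is the exponent appearing in \Ref{ahatinH1}. A maximal inequality over $x\in[0,1]^m$ --- whose entropy term is controlled by the smoothness of $w$ and whose tails are controlled by the error moment condition $\zeta > 4s/(2s-m)$ --- upgrades these pointwise orders to uniform ones, yielding $\|\ahat\|_{m,\alpha} = \op$ and hence $\ahat \in H_1(m,\alpha)$ with probability tending to one.

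I expect the crux to be exactly this balancing act in \Ref{ahatinH1}: the bandwidth must be small enough to kill the bias at rate $\opn$, yet large enough to keep the $m$-th order derivatives of the stochastic term $\ahat$ bounded, while the error moment order $\zeta$ must be large enough for the chaining/maximal inequalities that deliver uniformity over $x$ to converge. The hypotheses $s > 3m/2$, $\zeta > 4s/(2s-m)$ and $c_n \sim \{n\log n\}^{-1/(2s)}$ are tuned in concert to reconcile these competing demands, and carrying the maximal inequalities through for every multi-index $i \in I(m)$ --- for both the sup-norm and the H\"older part of $\|\cdot\|_{m,\alpha}$ --- is the heart of the argument.
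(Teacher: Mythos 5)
The paper does not actually prove this lemma: it is imported verbatim as Lemma 1 of M\"uller, Schick and Wefelmeyer (2009), whose proof proceeds essentially as you propose --- defining $\ahat$ as the error-linear part of the local polynomial fit so that the remainder in \Ref{rhatrahatneg} is the pure bias term, bounded by $O_p(c_n^s) = \opn$ via polynomial reproduction and the H\"older remainder, and establishing \Ref{ahatinH1} through derivative bounds of order $c_n^{-|i|}(nc_n^m)^{-1/2}$ made uniform by truncation and maximal inequalities, in which $s > 3m/2$, $\zeta > 4s/(2s-m)$, the kernel smoothness and the bandwidth $c_n \sim \{n\log n\}^{-1/(2s)}$ play precisely the roles you assign them. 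Your sketch is therefore correct in outline and matches the cited proof's architecture, with only the chaining/truncation details (where the moment condition does its quantitative work) left unexecuted, as you yourself flag.
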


We now use these results to show the difference between the
asymptotically linear statistic \Ref{aslinstat} and an empirical
process related to the shifted version of $T_n$ (called $R_1$ in Lemma
\ref{lemTnModulus} below) are asymptotically negligible. 
An unweighted version of that difference 
(with $W_j =1$) is considered in 
Theorem 2.2 of M\"uller, Schick and Wefelmeyer (2007). 
Since that statistic does not involve centred weights, 
the second part of $R_1$ (called $R_2$ in the lemma) is not asymptotically
negligible: it becomes a stochastic drift parameter that depends on the
error density $f$ ($f(t) \int \hat a \, dQ$ in that article) and is therefore 
{\em not} distribution free. This is in contrast to our case where we have 
mean zero weights, so $R_2$ does not affect the limiting distribution.
%
%
%
\begin{lemma} \label{lemTnModulus}
Let the null hypothesis hold. Suppose the assumptions of Theorem
\ref{thmTn} on $\hat r$, $r$, $G$ and $F$ are satisfied. Let $W_j$ be
a square integrable function of $X_j$ satisfying $E[W_j^2] < \infty$,
$j=1,\ldots,n$. Then $\sup_{t \in \R} |R_1| = \opn$, where
\begin{equation*}
R_1 = \avj W_j\Big\{
 \1\big[\sigma_0\err_j \leq t + \rhat(X_j) - r(X_j)\big]
 - \1[\sigma_0\err_j \leq t] - F\big(t + \rhat(X_j) - r(X_j)\big)
 + F(t)\Big\}.
\end{equation*}
If, additionally, $E[W_j] = 0$, $j=1,\ldots,n$, then $\sup_{t \in \R}
|R_2| = \opn$, where
\begin{equation*}
R_2 = \avj W_j\Big\{
 F\big(t + \rhat(X_j) - r(X_j)\big) - F(t)\Big\}.
\end{equation*}
\end{lemma}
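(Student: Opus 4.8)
The plan is to express both $R_1$ and $R_2$ as functionals of one weighted empirical process driven by the random shift $\rhat-r$, and to control that process by first reducing the shift to the smooth function $\ahat$ from Lemma \ref{lemrhat} and then invoking stochastic equicontinuity. By \Ref{rhatrahatneg} I would replace $\rhat(X_j)-r(X_j)$ by $\ahat(X_j)$ throughout: for $R_1$ this is harmless by monotonicity of the indicator (sandwich $\1[\ve_j\le t+\ahat(X_j)\pm\opn]$ between neighbouring indicators and bound the resulting $F$-increments using boundedness of $f$), and for $R_2$ it costs only $\|f\|_\infty\opn\,\frac1n\sj|W_j|=\opn$. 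By \Ref{ahatinH1} I may then assume $\ahat\in H_1(m,\alpha)$, and by uniform consistency of the smoother $\|\ahat\|_\infty=\op$.

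For $R_1$ (which needs only $E[W_j^2]<\infty$), introduce for $a\in H_1(m,\alpha)$ the process
\begin{equation*}
\nu_n(t,a)=\savj W_j\Big\{\1[\ve_j\le t+a(X_j)]-\1[\ve_j\le t]-F(t+a(X_j))+F(t)\Big\},
\end{equation*}
so that $\sqrt n\,R_1=\nu_n(t,\ahat)+\op$ while $\nu_n(t,0)\equiv 0$. For deterministic $a$ the summands are i.i.d.\ and conditionally centred given $X_j$, so $\nu_n$ is a genuine weighted empirical process, and I would prove it is stochastically equicontinuous over $\R\times H_1(m,\alpha)$ with $W$ as an $L_2$-multiplier. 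The two inputs are a uniform-entropy bound for $\{(x,e)\mapsto\1[\sigma_0 e\le t+a(x)]:t\in\R,\ a\in H_1(m,\alpha)\}$ — the $t$-coordinate is a monotone (VC) family and the sup-norm entropy of $H_1(m,\alpha)$ is of order $\epsilon^{-m/(m+\alpha)}$ with exponent $<1$, so the class is $P$-Donsker — together with the variance bound $E[W^2(\1[\ve\le t+a(X)]-\1[\ve\le t])^2]\le\|f\|_\infty\|a\|_\infty E[W^2]$, which shows the $L_2$-radius of the index shrinks with $\|a\|_\infty$. Since $\|\ahat\|_\infty=\op$, equicontinuity yields $\sup_t|\nu_n(t,\ahat)-\nu_n(t,0)|=\op$, i.e.\ $\sup_t|R_1|=\opn$. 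The smoothness hypothesis $s>3m/2$ enters here only through Lemma \ref{lemrhat}, which is what places $\ahat$ in a H\"older ball with enough derivatives for the entropy integral to converge (the link to the ``curse of dimensionality'' of Remark \ref{remCurseDim}).

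For $R_2$ I would Taylor expand $F$ about $t$: with $\delta_j=\ahat(X_j)$,
\begin{equation*}
\sqrt n\,R_2=f(t)\savj W_j\ahat(X_j)+\savj W_j\big\{F(t+\delta_j)-F(t)-f(t)\delta_j\big\}+\op.
\end{equation*}
The remainder is handled by the modulus of continuity of $f$ together with $\|\ahat\|_\infty=\op$ and the second-moment bound $\frac1n\sj\ahat(X_j)^2=O_p\big((nc_n^m)^{-1}\big)$, routed if necessary through the equicontinuity argument above. The heart of the matter is the leading term $f(t)\savj W_j\ahat(X_j)$. A crude envelope bound fails, because $\|\rhat-r\|_\infty$ is of larger order than $n^{-1/2}$; instead I would insert the explicit linear-in-errors form of the smoother, $\ahat(x)\approx\avk \ve_k K_n(x,X_k)$, and use the reproducing property of the equivalent kernel to recombine $\frac1n\sj W_j K_n(X_j,X_k)$ into $W_k$, the diagonal $j=k$ being $\opn$ since $K_n(X_j,X_j)=O(c_n^{-m})$ and $s>m$. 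This produces $f(t)\,\frac{1}{\sqrt n}\sk\ve_k W_k$, and it is precisely here that the extra hypothesis $E[W_j]=0$ must be brought to bear: in the unweighted case of M\"uller et al.\ (2007) the same computation instead leaves the non-negligible drift $f(t)\int\ahat\,dQ$ described before the lemma.

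I expect this final step to be the main obstacle. The values $\ahat(X_1),\dots,\ahat(X_n)$ share the same errors, so the summands of $\savj W_j\ahat(X_j)$ are strongly dependent and a crude bound is hopeless; one must instead carry out a careful conditional (U-statistic type) second-moment computation. The delicate point is that $E[W_j]=0$ by itself only \emph{centres} the recombined sum $\frac{1}{\sqrt n}\sk\ve_k W_k$, so extracting genuine negligibility ($\op$ rather than merely $\Op$) requires exploiting the finer structure — the error terms in the reproducing approximation and the centring of the weights — which is where I would concentrate the technical effort.
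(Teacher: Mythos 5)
Your treatment of $R_1$ is, in substance, the paper's own argument: the same Donsker class of weighted, conditionally centred indicator differences indexed by $t\in\R$ and $a\in H_1(m,\alpha)$ (the paper gets the Donsker property from Theorem 2.1 of M\"uller et al., 2007, rather than from an entropy computation), the same variance bound of order $\|f\|_\infty\|a\|_\infty E[W^2]$ shrinking with $\|\ahat\|_\infty=\op$, and the same reduction from $\rhat-r$ to $\ahat$ via \Ref{rhatrahatneg}; the paper implements your ``harmless by monotonicity'' sandwich carefully by splitting $W_j$ into positive and negative parts (the weights may be negative, so monotonicity alone gives nothing), which leaves two extra $F$-increment terms, each bounded by a multiple of $\|f\|_\infty A_n\avj|W_j|=\opn$. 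That half of your proposal is correct. The genuine gap is in $R_2$, and your closing paragraph has in fact located it: your route reduces the assertion to showing that $f(t)\,n^{-1/2}\sk\ve_k W_k$ is negligible, but under $H_0$ the variables $\ve_k$ and $W_k=W(X_k)$ are independent, so $E[\ve_k W_k]=0$ holds automatically whether or not $E[W_j]=0$, and the ordinary CLT gives $n^{-1/2}\sk\ve_k W_k\xrightarrow{D}N(0,\sigma_0^2E[W^2])$ --- a nondegenerate limit. Centring the weights buys nothing at this point, and no conditional second-moment or U-statistic refinement can turn an $\Op$ quantity with a nondegenerate Gaussian limit into $\op$. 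As laid out, the plan cannot close; indeed, if your reproducing-kernel recombination were exact it would refute the very claim being proved, so the entire burden falls on the approximation errors you defer, which is a sign that inserting the linear-in-errors form of $\ahat$ is the wrong tool here.

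The paper's proof of the second assertion never linearizes $\ahat$ and never opens up the errors inside it. It decomposes $R_2$ into three pieces: (i) $\avj W_j\{F(t+\ahat(X_j))-F(t)-E[F(t+\ahat(X))-F(t)\mid\Dset]\}$, treated by asymptotic equicontinuity of the $G$-Donsker class $\Fclass_2=\{X\mapsto W\{F(t+a(X))-E[F(t+a(X))]\}: t\in\R,\ a\in H_1(m,\alpha)\}$ together with the second-moment bound $2\|f\|_\infty^2E[W^2]\|\ahat\|_\infty^2=\op$; (ii) the factored drift $E[F(t+\ahat(X))-F(t)\mid\Dset]\,(\avj W_j)$, bounded by $\|f\|_\infty\|\ahat\|_\infty\,|\avj W_j|$ --- and this is the \emph{only} place where $E[W_j]=0$ enters: the CLT gives $\avj W_j=\Opn$ while $\|\ahat\|_\infty=\op$, so the product is $\opn$; and (iii) the swap from $\rhat-r$ to $\ahat$, costing $\|f\|_\infty A_n\avj|W_j|=\opn$ as in the first assertion. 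So where you try to use the mean-zero weights to centre an error-weighted sum extracted from inside $\ahat$, the paper uses them only through the standalone factor $\avj W_j=\Opn$ multiplying an $\op$ conditional shift. One caution when reconciling the two analyses: the quantity your recombination isolates is exactly the conditional mean $E[W\{F(t+\ahat(X))-F(t)\}\mid\Dset]$ of the summands in (i), and, unlike the members of $\Fclass$, the members of $\Fclass_2$ are centred by the constant $E[F(t+a(X))]$ rather than conditionally on $X$, hence are not mean-zero functions; any complete write-up along the paper's lines must therefore make sure the equicontinuity step in (i) is applied to the correctly centred process at precisely this point, since by your own computation that conditional mean is where the mass would sit.
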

%
%
\begin{proof}
Observe that the class of functions
\begin{equation*}
\Fclass = \Big\{(X,\sigma_0\err) \mapsto W\Big\{
 \1\big[\sigma_0\err \leq t + a(X)\big] - F\big(t + a(X)\big) \Big\}
 \,:\,t\in\R,~a\in H_1(m,\alpha)\Big\}
\end{equation*}
is $G \otimes F$-Donsker, which follows from the fact that $W_j$ is a
fixed, square-integrable function of $X_j$ and the class of indicator
functions in the definition of $\Fclass$ is $G \otimes F$-Donsker
from Theorem 2.1 of M\"uller et al.\ (2007).
It then follows from Corollary 2.3.12 of van der Vaart and Wellner
(1996) that empirical processes ranging over the Donsker class
$\Fclass$  are asymptotically equicontinuous, i.e.\ we have, for any
$\varphi > 0$,
\begin{equation} \label{equicon}
\lim_{\kappa \downarrow 0}\limsup_{n \to \infty} 
P\Big(\sup_{\{\fmem_1,\,\fmem_2 \in \Fclass\,:\,\Var(\fmem_1 - \fmem_2) < \kappa\}}
 n^{-1/2}\Big|\sj \Big\{
 \fmem_1(X_j,\sigma_0\err_j) - \fmem_2(X_j,\sigma_0\err_j) \Big\}
 \Big| > \varphi\Big) = 0.
\end{equation}
We are interested in the case that involves the approximation $\hat a$ 
of $\hat r - r$ in place of $a$ (see Lemma \ref{lemrhat}). Then the 
corresponding class of functions is, in general, no longer Donsker (and 
the equicontinuity property does not hold). However, we can assume that 
$\hat a$ is in $H_1(m, \alpha)$, which holds on an event that has probability 
tending to one. This together with a negligibility condition on the
variance guarantees that the extended class of processes involving $\hat a$ 
is also equicontinuous. 

The term $R_1$ from the first assertion can be written as the sum of
\begin{equation} \label{modahatzero}
\avj W_j\Big\{
 \1\big[\sigma_0\err_j \leq t + \ahat(X_j)\big]
 - \1[\sigma_0\err_j \leq t] 
 - F\big(t + \ahat(X_j)\big) + F(t)\Big\}
\end{equation}
and
\begin{align} \label{modrhatrahat}
&\avj W_j\Big\{
 \1\big[\sigma_0\err_j \leq t + \rhat(X_j) - r(X_j)\big]
 - F\big(t + \rhat(X_j) - r(X_j)\big)\Big\} \\ \nonumber
&- \avj W_j\Big\{
 \1\big[\sigma_0\err_j \leq t + \ahat(X_j)\big]
 - F\big(t + \ahat(X_j)\big) \Big\}.
\end{align}
The first assertion, $\|R_1\|_{\infty} = \opn$, will follow
if we show this separately for the two terms in \eqref{modahatzero} 
and \eqref{modrhatrahat}. Consider \eqref{modahatzero} first. We
fix the function $\ahat$ by conditioning on the observed data 
$\Dset = \{(X_1,Y_1),\ldots,(X_n,Y_n)\}$.
The variance of a function from the extension of $\Fclass$ that involves
$\ahat$ instead of $a$ is
\begin{align*}
&\Var\Big[W\Big\{\1\big[\sigma_0\err \leq t + \ahat(X)\big]
 - \1[\sigma_0\err \leq t] - F\big(t + \ahat(X)\big)
 + F(t)\Big\}\,\Big|\,\Dset\Big] \\
&= E\Big[W^2\Big\{
 F\big(\max\{t,\,t+\ahat(X)\}\big)
 - F\big(\min\{t,\,t+\ahat(X)\}\big) \\
&\phantom{= E\Big[W^2\Big\{}
 - \Big\{F\big(\max\{t,\,t+\ahat(X)\}\big)
 - F\big(\min\{t,\,t+\ahat(X)\}\big)\Big\}^2
 \Big\}\,\Big|\,\Dset\Big].
\end{align*}
If this variance is $\op$, then the extended class of processes is equicontinuous,
and the term in \eqref{modahatzero} has the desired order $\opn$, uniformly in 
$t \in \R$.
That the variance condition holds true is easy to see: the last term
is bounded by $\|f\|_{\infty}E\big[W^2\big]\|\ahat\|_{\infty} = \op$ with
$\|\ahat\|_{\infty} = \op$ (see page 961 of the proof of Lemma 1 in
M\"uller et al., 2009). 

Turning our attention now to the second term \eqref{modrhatrahat}, we
have that $\rhat - r = (\rhat - r - \ahat) + \ahat$, and, by property
\Ref{rhatrahatneg} of Lemma \ref{lemrhat}, $A_n = \|\rhat - r -
\ahat\|_{\infty} = \opn$.  
Write $W_j^- = W_j\1[W_j < 0]$ and $W_j^+ = W_j\1[W_j \geq 0]$
for the negative and the positive part of $W_j$, i.e.\
$W_j = W_j^- + W_j^+$, $j=1,\ldots,n$. This yields
the following bounds for the weighted indicator functions:
\begin{equation*}
W_j^-\1\big[\sigma_0\err_j \leq t + \rhat(X_j) - r(X_j)\big]
 \leq W_j^-\1\big[\sigma_0\err_j \leq t - A_n + \ahat(X_j)\big],
\end{equation*}
\begin{equation*}
W_j^-\1\big[\sigma_0\err_j \leq t + \ahat(X_j)\big]
 \geq W_j^-\1\big[\sigma_0\err_j \leq t + A_n + \ahat(X_j)\big],
\end{equation*}
\begin{equation*}
W_j^+\1\big[\sigma_0\err_j \leq t + \rhat(X_j) - r(X_j)\big]
 \leq W_j^+\1\big[\sigma_0\err_j \leq t + A_n + \ahat(X_j)\big].
\end{equation*}
and
\begin{equation*}
W_j^+\1\big[\sigma_0\err_j \leq t + \ahat(X_j)\big]
 \geq W_j^+\1\big[\sigma_0\err_j \leq t - A_n + \ahat(X_j)\big].
\end{equation*}
Straightforward calculations show that \eqref{modrhatrahat} is bounded by
\begin{align*}
&\avj \Big\{ W_j^+ - W_j^- \Big\} \Big\{
 \1\big[\sigma_0\err_j \leq t + A_n + \ahat(X_j)\big]
 - F\big(t + A_n + \ahat(X_j)\big) \Big\} \\
&\quad - \avj \Big\{ W_j^+ - W_j^- \Big\} \Big\{
 \1\big[\sigma_0\err_j \leq t - A_n + \ahat(X_j)\big]
 - F\big(t - A_n + \ahat(X_j)\big) \Big\} \\
&\quad + \avj \Big\{ W_j^+ - W_j^- \Big\} \Big\{
 F\big(t + A_n + \ahat(X_j)\big)
 - F\big(t - A_n + \ahat(X_j)\big) \Big\} \\
&\quad - \avj W_j \Big\{
 F\big(t + \rhat(X_j) - r(X_j)\big)
 - F\big(t + \ahat(X_j)\big) \Big\} \\
&= \avj \big|W_j\big|\Big\{
 \1\big[\sigma_0\err_j \leq t + A_n + \ahat(X_j)\big]
 - F\big(t + A_n + \ahat(X_j)\big)\Big\} \\
&\quad - \avj \big|W_j\big|\Big\{
 \1\big[\sigma_0\err_j \leq t - A_n + \ahat(X_j)\big]
 - F\big(t - A_n + \ahat(X_j)\big)\Big\} \\
&\quad + \avj \big|W_j\big|\Big\{
 F\big(t + A_n + \ahat(X_j)\big)
 - F\big(t - A_n + \ahat(X_j)\big)\Big\} \\
&\quad - \avj W_j\Big\{
 F\big(t + \rhat(X_j) - r(X_j)\big)
 - F\big(t + \ahat(X_j)\big)\Big\}.
\end{align*}
Hence \eqref{modrhatrahat} is $\opn$ uniformly in $t \in
\R$ holds if we show
\begin{align} \label{modAn}
\sup_{t \in \R} \Big|
 &\avj \big|W_j\big|\Big\{
 \1\big[\sigma_0\err_j \leq t + A_n + \ahat(X_j)\big]
 - F\big(t + A_n + \ahat(X_j)\big)\Big\} \\ \nonumber
&- \avj \big|W_j\big|\Big\{
 \1\big[\sigma_0\err_j \leq t - A_n + \ahat(X_j)\big]
 - F\big(t - A_n + \ahat(X_j)\big)\Big\}
 \Big| = \opn,
\end{align}
\begin{equation} \label{modAnRem}
\sup_{t \in \R} \avj \big|W_j\big|\Big\{
 F\big(t + A_n + \ahat(X_j)\big)
 - F\big(t - A_n + \ahat(X_j)\big)\Big\}
 = \opn
\end{equation}
and
\begin{equation} \label{modFrhatrahatRem}
\sup_{t \in \R} \Big| \avj W_j\Big\{
 F\big(t + \rhat(X_j) - r(X_j)\big)
 - F\big(t + \ahat(X_j)\big)\Big\} \Big|
 = \opn.
\end{equation}

Beginning with \Ref{modAn}, since the random variables
$|W_1|,\ldots,|W_n|$ are square integrable, the class of functions
\begin{equation*}
\Fclass^+ = \Big\{
 (X,\sigma_0\err) \mapsto |W| \Big\{
 \1\big[\sigma_0\err \leq t + a(X)\big]
 - F\big(t + a(X)\big) \Big\}
 \,:\,t \in \R,~a \in H_1(m,\alpha) \Big\}
\end{equation*}
is also $G \otimes F$-Donsker. Therefore the asymptotic
equicontinuity property holds for empirical processes ranging over
$\Fclass^+$, i.e.\ \Ref{equicon} holds with $\Fclass^+$ in place of
$\Fclass$. However, rather than investigating the situation where
$\ahat$ is limiting toward zero, as we did above, we will consider two
sequences of real numbers $\{s_n\}_{n = 1}^{\infty}$ and  $\{t_n\}_{n
  = 1}^{\infty}$ satisfying $|t_n - s_n| = o(1)$, which corresponds to
the case of random sequences $t \pm A_n$ conditional on the data
$\Dset$. Analogously to the calculations following \Ref{equicon}, we
now prove the variance condition 
for the function $(X,\sigma_0\err) \mapsto |W|\{\1[\sigma_0\err \leq
t_n + a(X)] - \1[\sigma_0\err \leq s_n + a(X)] - F(t_n + a(X)) +
F(s_n + a(X))\}$. The variance is
\begin{align*}
&\Var\Big[|W|\Big\{
 \1\big[\sigma_0\err \leq t_n + a(X)\big]
 - \1\big[\sigma_0\err \leq s_n + a(X)\big]
 - F\big(t_n + a(X)\big)
 - F\big(s_n + a(X)\big) \Big\}\Big] \\
&= E\Big[W^2\Big\{
 F\big(\max\{t_n + a(X),\,s_n + a(X)\}\big)
 - F\big(\min\{t_n + a(X),\,s_n + a(X)\}\big) \\
&\phantom{= E\Big[W^2\Big\{}
 - \Big\{F\big(\max\{t_n + a(X),\,s_n + a(X)\}\big)
 - F\big(\min\{t_n + a(X),\,s_n + a(X)\}\big)\Big\}^2
 \Big\}\Big].
\end{align*}
and bounded by $\|f\|_{\infty}E[W^2]|t_n - s_n| = o(1)$. Hence we have
equicontinuity, and therefore, for any $a \in H_1(m,\alpha)$
and sequences of real numbers $\{s_n\}_{n = 1}^{\infty}$ and
$\{t_n\}_{n = 1}^{\infty}$ satisfying $|t_n - s_n| = o(1)$,
\begin{align*}
&\sup_{t \in \R} \Big|
 \avj \big|W_j\big|\Big\{
 \1\big[\sigma_0\err_j \leq t_n + a(X_j)\big]
 - F\big(t_n + a(X_j)\big) \Big\} \\
&\phantom{\sup_{t \in \R} \Big|}
 - \avj \big|W_j\big|\Big\{
 \1\big[\sigma_0\err_j \leq s_n + a(X_j)\big]
 - F\big(s_n + a(X_j)\big) \Big\} \Big| = \opn.
\end{align*}
Equation \Ref{modAn} follows analogously,
with $t_n$ replaced by
$t+A_n$, $s_n$ by $t-A_n$ and $a$ by $\hat a$.
%
%

Now consider \Ref{modAnRem} and \eqref{modFrhatrahatRem}.
Since $E|W| \leq E^{1/2}[W^2] < \infty$, $n^{-1}\sj|W_j|$ is consistent
for $E|W|$. The left-hand side of \Ref{modAnRem}
is bounded by $2\|f\|_{\infty}A_nn^{-1}\sj|W_j|$ and
\eqref{modFrhatrahatRem} is bounded by
$\|f\|_{\infty}A_nn^{-1}\sj|W_j|$. Since $A_n = \opn$, these bounds 
are also $\opn$, i.e.\ \Ref{modAnRem} and
\eqref{modFrhatrahatRem} hold. This implies that the term in
\eqref{modrhatrahat} has order $\opn$, uniformly in $t \in \R$, which completes 
the proof of $\|R_1\|_{\infty} = \opn$.

We will now prove the second assertion that $\|R_2\|_{\infty} =
  \opn$. The proof is simpler than the proof of the first assertion
since we now require that the random variables $W_1,\ldots,W_n$ have mean 
zero, which allows us to use the central limit theorem.
Write $R_2$ as
\begin{align*}
R_2 &= \avj W_j\Big\{ F\big(t + \ahat(X_j)\big) - F(t)
 - E\Big[F\big(t + \ahat(X)\big) - F(t)
 \,\Big|\,\Dset\Big] \Big\} \\
&\quad + E\Big[F\big(t + \ahat(X)\big) - F(t)
   \,\Big|\,\Dset\Big]\Big(\avj W_j\Big)
 \\ &\quad
 + \avj W_j\Big\{
 F\big(t + \rhat(X_j) - r(X_j)\big) - F\big(t + \ahat(X_j)\big)
 \Big\}.
\end{align*}
Then $\|R_2\|_{\infty}$ is bounded by three terms:
\begin{equation} \label{modFahat}
\sup_{t \in \R} \Big| \avj W_j \Big\{
 F\big(t + \ahat(X_j)\big) - F(t)
 - E\Big[F\big(t + \ahat(X)\big) - F(t)\,\Big|\,\Dset\Big]
 \Big\} \Big|,
\end{equation}
\begin{equation} \label{modFahatRem}
\sup_{t \in \R} \Big| E\Big[
 F\big(t + \ahat(X)\big) - F(t)\,\Big|\,\Dset\Big] \Big|
 \Big| \avj W_j \Big|,
\end{equation}
and the third term is the left-hand side of \Ref{modFrhatrahatRem},
which we have already shown is $\opn$. From the arguments above, it
follows for the class of functions
\begin{equation*}
\Fclass_2 = \Big\{ X \mapsto W \Big\{
 F\big(t + a(X)\big) - E\big[F\big(t + a(X)\big)\big]
 \Big\}\,:\,t \in \R,~a \in H_1(m,\alpha) \Big\}
\end{equation*}
to be $G$-Donsker. Therefore, empirical processes ranging over
$\Fclass_2$ are asymptotically equicontinuous as in \Ref{equicon}, but
now without $\sigma_0 e$ and with $\Fclass_2$ in place of $\Fclass$.

As before, we can assume that $\ahat$ belongs to
$H_1(m,\alpha)$. We will now show the variance condition is satisfied
for the function $X \mapsto W\{F(t + \ahat(X)) - F(t) - E[F(t +
\ahat(X)) - F(t)\,|\,\Dset]\}$. This variance is equal to
\begin{align*}
&E\Big[W^2\Big\{F\big(t + \ahat(X)\big) - F(t)\Big\}^2
 \,\Big|\,\Dset\Big]
 + E\big[W^2\big]E^2\Big[
 F\big(t + \ahat(X)\big) - F(t)\,\Big|\,\Dset\Big] \\
&- 2E\Big[W^2 \Big\{ F\big(t + \ahat(X)\big) - F(t)
 \Big\}\,\Big|\,\Dset\Big]
 E\Big[F\big(t + \ahat(X)\big) - F(t)\,\Big|\,\Dset\Big],
\end{align*}
and is bounded by $2\|f\|_{\infty}^2
  E[W^2]\|\ahat\|_{\infty}^2$. Since 
$\|\ahat\|_{\infty} = \op$, the bound above is $\op$ and the
variance is asymptotically negligible. Hence we have
equicontinuity and \Ref{modFahat} is $\opn$, as desired.

Finally we can bound \Ref{modFahatRem} by
$\|f\|_{\infty}\|\ahat\|_{\infty}|n^{-1}\sj W_j|$.
The central limit theorem combined with $E[W_j] = 0$, $j=1,\ldots,n$,
gives $|n^{-1}\sj W_j| = \Opn$. Since $\|\ahat\|_{\infty} = \op$,
both the bound above and \Ref{modFahatRem} are of the order
$\opn$. This completes the proof of the second assertion that
$\|R_2\|_{\infty} = \opn$.
\end{proof}

Using the results of Lemma \ref{lemTnModulus}, we will now show that
the test statistic $T_n$ and the asymptotically linear statistic above
are asymptotically equivalent. This will imply the limiting
distribution of $T_n$ is the same as that of the asymptotically linear
statistic \Ref{aslinstat}, which we have already investigated; see Corollary
\ref{coraslinstat}.

\begin{proof}[Proof of Theorem \ref{thmTn}]
Consider the asymptotically linear statistic from \Ref{aslinstat},
\begin{equation*}
n^{-1/2}\sj W_j \Big\{ \1[\sigma_0\err_j \leq t] - F(t) \Big\},
\end{equation*}
with $W_j$ given in \Ref{Wj}. It follows, by the arguments preceding
Corollary \ref{coraslinstat}, for this statistic to have the limiting 
distribution $B_0 \circ F(t)$, where $B_0$ is the Brownian bridge.
We will now show that
\begin{equation} \label{diff}
\sup_{t \in \R} \Big| \avj \hW_j\1\big[\hve_j \leq t\big]
 - \avj W_j\Big\{\1[\sigma_0\err_j \leq t] - F(t)\Big\} \Big|
 = \opn.
\end{equation}
Combining the above, the desired statement of Theorem \ref{thmTn}
concerning the limiting distribution of the test statistic $T_n$
follows, i.e.\
\begin{equation*}
T_n = \sup_{t \in \R} \Big| n^{-1/2}\sj
 \hW_j\1\big[\hve_j \leq t\big] \Big| 
 \xrightarrow[]{D} \sup_{ t \in [0,1]} |B_0(t)|.
\end{equation*}
It follows from $\sj \hW_j = 0$ that we can decompose the difference 
in \Ref{diff} into the following sum of five remainder terms: 
$R_1 + R_3 + R_4 - R_5 - R_6$, where $R_1$ and $R_2$  
(which is part of $R_3$) are the remainder terms of Lemma 
\ref{lemTnModulus}, and where the other terms are defined as follows,
\begin{align*}
R_3 &=
\hat V R_2, \qquad
\hat V = \Big( \frac{\Var[\omega(X_1)]}{
 \avj \{\omega(X_j) - \avk \omega(X_k)\}^2} \Big)^{1/2},
\\
 R_4 &= (\hat V-1)
 \Big(\avj W_j \Big\{
 \1\big[\sigma_0\err \leq t + \rhat(X_j) - r(X_j)\big]
 - F\big(t + \rhat(X_j) - r(X_j)\big) \Big\}\Big),
\\
R_5 &= \hat V
\Big(\avj W_j\Big)
\Big( \avj \Big\{
 \1\big[\sigma_0\err \leq t + \rhat(X_j) - r(X_j)\big]
 - F\big(t + \rhat(X_j) - r(X_j)\big) \Big\} \Big),
\\
 R_6 &= \hat V
\Big(\avj W_j\Big)
\Big(
 \avj \Big\{ F\big(t + \rhat(X_j) - r(X_j)\big) - F(t) \Big\}
 \Big).
\end{align*}
Showing $\sup_{t \in \R} |R_{i}| = \opn$, $i=1,\ldots,6$,
will conclude the proof. The statement for $i = 1$ holds true by
the first part of Lemma \ref{lemTnModulus} and the statement for $i =
2$ holds true by the second part of the same lemma. Note that the
assumptions of both statements of Lemma \ref{lemTnModulus} are
satisfied for our choice of weights $W_1,\ldots,W_n$. The statement
for $i = 3$ follows from the second statement of the same lemma
regarding $R_2$ and from the fact that the first quantity of $R_3$,
$\hat V$, is a consistent estimator of one.

To show $\sup_{t \in \R}|R_4| = \opn$, we only need to demonstrate that 
\begin{equation} \label{R2}
\sup_{t \in \R}\Big| \avj W_j \Big\{
 \1\big[\sigma_0\err_j \leq t + \rhat(X_j) - r(X_j)\big]
 - F\big(t + \rhat(X_j) - r(X_j)\big) \Big\} \Big|
 = \Opn,
\end{equation}
because the first term of $R_4$ both does not depend on $t$ and is
asymptotically negligible. 
To verify \Ref{R2}, combine the statement
for $R_1$ with the limiting result \Ref{coraslinstatresult} from
Corollary \ref{coraslinstat} for the asymptotically linear statistic,
which shows $n^{-1} \sj W_j \{ \1[\sigma_0\err_j \leq t] - F(t) \} =
O_p(n^{-1/2})$, uniformly in $t \in \R$.

Now consider $R_5$ and remember that both Corollary \ref{coraslinstat}
and the first statement of Lemma \ref{lemTnModulus} cover the special
case where all of the weights are equal to one, i.e.\ \Ref{R2} holds
with $W_j=1$, $j=1,\ldots, n$. Therefore, the third term of $R_5$
is $\Opn$, uniformly in $t \in \R$. It is clear for the product of the
first and second terms of $R_5$ to be $\op$. It then follows that
$\sup_{t \in \R}|R_5| = \opn$.

We find that $\sup_{t \in \R} |R_6|$ is bounded by
$$
  \sup_{t \in \R} |f(t)| \, \hat V \,
  \Big( \sup_{x \in [0,\,1]^m} \big|\ahat(x)\big|
 + \sup_{x \in [0,\,1]^m} \big|\rhat(x) - r(x) - \ahat(x)\big| \Big)
  \Big|\avj W_j\Big|,
$$
with $\hat V$ a consistent estimator of one. As
in the proof of Lemma \ref{lemTnModulus}, we use $\sup_{x \in
  [0,\,1]^m} |\ahat(x)| = \op$ and $\sup_{x \in [0,\,1]^m} |\rhat(x) -
r(x) - \ahat(x)| = \op$, e.g.\ see property \Ref{rhatrahatneg} of
Lemma \ref{lemrhat}. Hence, the third term in the bound above is
$\op$. We can apply the central limit theorem to treat the fourth
quantity and find it is $\Opn$. Combining these findings yields the
bound above is $\opn$. This implies $\sup_{t \in \R} |R_6| = \opn$.
\end{proof}


\section{Concluding remarks} \label{conrem}

We have introduced a completely nonparametric test to detect 
heteroskedasticity in a regression model with multivariate covariates
that not only converges at the parametric root-$n$ rate, but is also
strikingly simple. 
The test has the advantage that it is asymptotically distribution free, 
i.e.\ quantiles are readily available. 
The same test can also be applied if responses are 
missing at random by simply omitting the cases that are not complete
and using the same quantiles.
Crucial for the performance of the test is the choice of weights: we have 
seen that the detection function $\omega$ should be highly correlated
with the scale function $\sigma$ to maximise the power of the test.
If $\omega$ and $\sigma$ are not or only vaguely correlated, 
then the test has no or almost no power. This suggests that it is best
to work with a (possibly estimated) detection function $\omega$ that 
has the same shape as $\sigma$.

The methodology developed in this article can be easily extended to form
related tests for other model conditions and/or for other regression models.
We will illustrate this below by means of two examples. In Example 
\ref{exampleParamScale} we show how we can modify our statistic \Ref{Tn} to 
obtain a test for the parametric form of the scale function. 
In Example \ref{exampleAdditive} 
we discuss a possible test for additivity of the regression
function. This example is representative for tests that are
based on detecting differences in the regression function under the null 
and under the alternative hypothesis.

\begin{example} \label{exampleParamScale}
Our method can be modified to obtain tests for the parametric form 
of the scale function, i.e.\ with null hypothesis
$\sigma(\cdot) = \sigma_\theta(\cdot)$ for some $\theta \in \R^p$.
Under the null hypothesis we have $\ve = \sigma_\theta(X) \err$,
with $\err$ scaled and centred as before,
so that the standardised residuals $Z = \ve/\sigma_\theta(X) = \err$ 
and $X$ are independent, which is the key prerequisite for our test
for heteroskedasticity. Hence we can simply use our test based on 
the statistic
$$
\sup_{t \in \R} \Big| n^{-1/2} \sj
\hW_j\1\big[\hat Z_j \leq t\big] \Big|, \quad
\hat Z_j = \frac{\hve_j}{\sigma_{\hat \theta}(X_j)},
$$
which is the statistic $T_n$ from \Ref{Tn}, now with estimated 
standardised residuals $\hat Z$ in place of $\hve$, 
where $\hat \theta$ is a consistent estimator of $\theta$.
We expect that the test will also be asymptotically distribution free: 
the standardisation will result in an asymptotically negligible drift which 
only affects the indicators ${\bf 1}[\hat Z_j \le t]$. This can be handled 
using similar arguments. 
\end{example}


\begin{example} \label{exampleAdditive}
Another important application are tests about the regression function.
One might, for example, want to check if certain components of the regression 
function are constant or irrelevant, or if the regression function has 
a specific structure. 
Suppose, for example, we assume an additive nonparametric model with 
two-dimensional covariate vector 
$X_j=(X_{1,j},X_{2,j})^\top$, i.e.\ the regression function is
$r(X_{1,j},X_{2,j}) = r_1(X_{1,j}) + r_2(X_{2,j})$ under the null hypothesis. 
The test statistic is
$$
\sup_{t \in \R} \Big| n^{-1/2} \sj
\hW_j\1\big[\hve_j \leq t\big] \Big|, \quad
\hve_j = Y_j - \hat r_1(X_{1,j}) - \hat r_2(X_{2,j}),
$$
where $\hat r_1(x_1)$ and $\hat r_2(x_2)$ estimate $\bar r_1(x_1)$ and 
$\bar r_2(x_2)$, with $\bar r_1 = r_1$ and $\bar r_2 = r_2$ under $H_0$.
For sufficiently large $n$ we have the approximation
$$
\1[\hve \le t] \approx \1[\ve \le t + s(X_1,X_2)], \quad
s(x_1,x_2) = \bar r_1(x_1) + \bar r_2(x_2)  - r(x_1,x_2),
$$
where the shift $s$ is zero if the null hypothesis holds true.
To understand the construction and the power of the test, consider
again \Ref{diffomega} from the introduction (cf.\ Remark 
\ref{remTnpow1} on ``power under fixed alternatives'').
For simplicity assume $\sigma(\cdot) = \sigma_0$, i.e.\ 
$\ve = \sigma_0 e$ and $X$ are independent. This time we have to take the
shift into account and consider the difference
$$
E\big[\big\{\omega(X) - E\big[\omega(X)\big]\big\} 
  \1[\ve \leq t + s(X_1,X_2)]\big],
$$
which is zero under $H_0$, due to the independence assumption and 
since $s \equiv 0$. Under the alternative hypothesis we have
$$
E\big[ \omega(X) \1[ \ve \leq t + s(X_1,X_2)] \big]
= 
E\Big[ \omega(X) F\Big(\frac{t + s(X_1,X_2)}{\sigma_0}\Big) \Big],
$$
which, in general, does not equal 
$E[\omega(X)] E\big(F[ \{t + s(X_1,X_2)\}/\sigma_0] \big)$,
i.e.\ the above difference is not zero. As already observed in
Remark \ref{remTnpow1}, we expect a good power if the detection
function is suitably chosen, here in such a way that $\omega(X)$ 
and the shift function $s(X_1, X_2)$ are highly correlated.
\end{example}

\section*{Acknowledgements}
Justin Chown acknowledges financial support from the contract 
`Projet d'Actions de Recherche Concert\'ees' (ARC) 11/16-039 of the 
`Communaut\'e fran\c{c}aise de Belgique', granted by the 
`Acad\'emie universitaire Louvain', the IAP research network P7/06 of
the Belgian Government (Belgian Science Policy) and the Collaborative
Research Center ``Statistical modeling of nonlinear dynamic
processes'' (SFB 823, Teilprojekt  C4) of the German Research
Foundation (DFG).


\end{document}